\begin{document}

\def \d {{\rm d}}

\def \bm #1 {\mbox{\boldmath{$m_{(#1)}$}}}

\def \bF {\mbox{\boldmath{$F$}}}
\def \cF {\mbox{\boldmath{$\cal F$}}}
\def \bH {\mbox{\boldmath{$H$}}}
\def \bC {\mbox{\boldmath{$C$}}}
\def \bSS {\mbox{\boldmath{$S$}}}
\def \bS {\mbox{\boldmath{${\cal S}$}}}
\def \bV {\mbox{\boldmath{$V$}}}
\def \bff {\mbox{\boldmath{$f$}}}
\def \bT {\mbox{\boldmath{$T$}}}
\def \bk {\mbox{\boldmath{$k$}}}
\def \bl {\mbox{\boldmath{$\ell$}}}
\def \bn {\mbox{\boldmath{$n$}}}
\def \bbm {\mbox{\boldmath{$m$}}}
\def \tbbm {\mbox{\boldmath{$\bar m$}}}

\def \T {\bigtriangleup}
\newcommand{\msub}[2]{m^{(#1)}_{#2}}
\newcommand{\msup}[2]{m_{(#1)}^{#2}}

\newcommand{\be}{\begin{equation}}
\newcommand{\ee}{\end{equation}}

\newcommand{\beqn}{\begin{eqnarray}}
\newcommand{\eeqn}{\end{eqnarray}}
\newcommand{\AdS}{anti--de~Sitter }
\newcommand{\AAdS}{\mbox{(anti--)}de~Sitter }
\newcommand{\AAN}{\mbox{(anti--)}Nariai }
\newcommand{\AS}{Aichelburg-Sexl }
\newcommand{\pa}{\partial}
\newcommand{\pp}{{\it pp\,}-}
\newcommand{\ba}{\begin{array}}
\newcommand{\ea}{\end{array}}

\newcommand*\bR{\ensuremath{\boldsymbol{R}}}

\newcommand*\BF{\ensuremath{\boldsymbol{F}}}
\newcommand*\BR{\ensuremath{\boldsymbol{R}}}
\newcommand*\BS{\ensuremath{\boldsymbol{S}}}
\newcommand*\BC{\ensuremath{\boldsymbol{C}}}
\newcommand*\bg{\ensuremath{\boldsymbol{g}}}
\newcommand*\bE{\ensuremath{\boldsymbol{E}}}
\newcommand*\bA{\ensuremath{\boldsymbol{A}}}

\newcommand{\M}[3] {{\stackrel{#1}{M}}_{{#2}{#3}}}
\newcommand{\m}[3] {{\stackrel{\hspace{.3cm}#1}{m}}_{\!{#2}{#3}}\,}

\newcommand{\tr}{\textcolor{red}}
\newcommand{\tb}{\textcolor{blue}}
\newcommand{\tg}{\textcolor{green}}

\newcommand{\thorn}{\mathop{\hbox{\rm \th}}\nolimits}

\def\a{\alpha}
\def\g{\gamma}
\def\de{\delta}

\def\b{{\kappa_0}}

\def\E{{\cal E}}
\def\B{{\cal B}}
\def\R{{\cal R}}
\def\F{{\cal F}}
\def\L{{\cal L}}

\def\e{e}
\def\bb{b}

\newtheorem{theorem}{Theorem}[section] 
\newtheorem{cor}[theorem]{Corollary} 
\newtheorem{lemma}[theorem]{Lemma} 
\newtheorem{proposition}[theorem]{Proposition}
\newtheorem{definition}[theorem]{Definition}
\newtheorem{remark}[theorem]{Remark}

\title{Einstein-Maxwell fields as solutions of higher-order theories}

\author[1]{Marcello Ortaggio\thanks{ortaggio(at)math(dot)cas(dot)cz}}

\affil[1]{Institute of Mathematics of the Czech Academy of Sciences, \newline \v Zitn\' a 25, 115 67 Prague 1, Czech Republic}

\maketitle

\abstract{
We study four-dimensional Einstein-Maxwell fields for which any higher-order corrections to the field equations effectively reduces to just a rescaling of the gravitational and the cosmological constant. These configurations are thus simultaneous solutions of (virtually) any modified theory of gravity coupled (possibly non-minimally) to any electrodynamics.
In the case of non-null electromagnetic fields we provide a full characterization of such {\em universal solutions}, which correspond to a family of gravitational waves propagating in universes of the Levi-Civita--Bertotti--Robinson type. For null fields we first obtain a set of general necessary conditions, and then a full characterization for a special subfamily, which turns out to represent electromagnetic waves accompanied by pure radiation in the (anti-)Nariai background. The results are exemplified for the case of Born-Infeld, ModMax and Horndeski electrodynamics.
}

\vspace{.2cm}
\noindent

%

\tableofcontents

\section{Introduction}

\label{intro}

\subsection{Background and definition of universal and almost universal solutions}

\label{subsec_backgr}

Attempts to cure the divergent electron's self-energy motivated early proposals for modified classical theories of electrodynamics \cite{Mie12,Born33,BorInf34}. Nonlinear deviations from Maxwell's theory also arise in effective theories derived from quantum electrodynamics \cite{HeiEul36,Weisskopf36,Euler36,Schwinger51} or from string theory \cite{FraTse85}. While finding exact solutions of nonlinear electrodynamics (NLE) is in general much more difficult than in Maxwell's theory, Schr\"odinger observed \cite{Schroedinger35,Schroedinger43} that all null fields (defined by $F_{ab}F^{ab}=0={}^{*}\!F_{ab}F^{ab}$) which solve Maxwell's equations  automatically solve also a large set of NLE and are, in this sense, ``universal'' solutions. It was subsequently pointed out that plane waves (a special case of null fields) solve not only NLE but also higher-derivative theories \cite{Deser75}. Extensions beyond the case of plane waves and to allow for certain curved backgrounds, $p$-forms of higher ranks and higher dimensions have been obtained in recent years in the case of electromagnetic test fields \cite{OrtPra16,OrtPra18,HerOrtPra18}. However, it is clearly also desirable to understand how such universal electromagnetic fields backreact on the spacetime geometry, i.e., to study universal solutions of modified theories of gravity coupled to modified electrodynamics. In that direction, Kuchynka and the present author have obtained a characterization of Einstein-Maxwell solutions for which all higher-order correction vanish identically \cite{KucOrt19}. In this work we will go beyond the results of \cite{KucOrt19} by relaxing some of the assumptions made there, as we explain in the following.

In the absence of matter sources, an Einstein-Maxwell solution consists of a pair $(\bg,\bF)$ in which $\bg$ is a Lorentzian metric and $\bF=\d\bA$ a 2-form field, such that the Einstein-Maxwell field equations are satisfied, i.e.,
\beqn
 & & G_{ab}+\Lambda_0 g_{ab}=\kappa_0 T_{ab} , \label{Einst} \\ 
 & & \nabla_b{F}^{ab}=0 , \label{Maxw}
\eeqn
where $\Lambda_0$ and $\kappa_0$ are constants ($\kappa_0$ is dimensionless in geometrized units), $G_{ab}$ is the Einstein tensor and 
\be 
	T_{ab}=F_{ac}F_b^{\phantom{b}c}-\frac{1}{4}g_{ab}F_{cd}F^{cd} .
	\label{T}
\ee

We will be interested in modified theories of gravity coupled to an electromagnetic field, for which the field equations~\eqref{Einst}, \eqref{Maxw} are replaced by a more general system
\beqn
 & & G_{ab}+\Lambda g_{ab}=\kappa E_{ab} , \label{Einst_gen} \\ 
 & & \nabla_b{H}^{ab}=0 , \label{Maxw_gen}
\eeqn
where $\Lambda$ and $\kappa$ are again constants (possibly different from $\Lambda_0$ and $\kappa_0$), while $\bE$ is a symmetric, divergencefree 2-tensor and $\bH$ a 2-form, both constructed in terms of $\bF$, the Riemann tensor $\BR$ associated with $\bg$, and their covariant derivatives of arbitrary order (and contractions with $\bg$). In particular, if one considers theories defined by a Lagrangian density of the form ${\cal L}=\frac{1}{\kappa}(R-2\Lambda)+L(\bR,\nabla\bR,\ldots,\bF,\nabla\bF,\ldots)$, then $E_{ab}$ and $\nabla_b{H}^{ab}$ will be determined by the variations $\frac{1}{\sqrt{-g}}\frac{\delta S_L}{\delta g^{ab}}$ and $\frac{1}{\sqrt{-g}}\frac{\delta S_L}{\delta A_a}$, respectively, where $S_L=\int\d^4x\sqrt{-g}L$ (some explicit examples will be given in sections~\ref{sec_NLE} and \ref{sec_Horn}).

The purpose of the present paper is to identify a class of four-dimensional Einstein-Maxwell fields~$(\bg,\bF)$ solutions of~\eqref{Einst}, \eqref{Maxw} for which {\em all} possible tensors $\bE$ and $\bH$ that one can construct (as described above) are such that:
\begin{enumerate}
	\item\label{cond1} $\bE$ takes the form 
	\be
	 E_{ab}=b_1T_{ab}+b_2 g_{ab} ,
	 \label{univ_E}
	\ee
	where $b_1$ and $b_2$ are spacetime constants (but may depend on the particular $\bE$ being considered and on the specific solution $(\bg,\bF)$ chosen withing the given class) and $T_{ab}$ is as in~\eqref{T}. 
	\item\label{cond2} $\nabla_b{H}^{ab}=0$ identically.
\end{enumerate}
Pairs $(\bg,\bF)$ satisfying conditions~\ref{cond1} and \ref{cond2} will be referred to as {\em universal solutions}.

The reason for condition~\ref{cond2} is obvious -- the modified Maxwell equation~\eqref{Maxw_gen} will be automatically satisfied by the pair $(\bg,\bF)$. 
On the other hand, condition~\ref{cond1} ensures that the modified Einstein equation~\eqref{Einst_gen} is also satisfied for values of the coupling constants determined by
\be
 \Lambda-\Lambda_0=\kappa b_2 , \qquad \kappa_0=\kappa b_1 .
 \label{algebr_constr}
\ee

In other words, a universal Einstein-Maxwell solution $(\bg,\bF)$ does not only solve the Einstein-Maxwell theory~\eqref{Einst}, \eqref{Maxw} but also any modified theory admitting field equations of the form~\eqref{Einst_gen} and \eqref{Maxw_gen}, provided the algebraic conditions~\eqref{algebr_constr} are satisfied.\footnote{The spacetime constants $b_1$ and $b_2$ may be functions of $\Lambda_0$ and possible coupling constants of the modified theory, as well as of other parameters which characterize a specific pair $(\bg,\bF)$ -- see sections~\ref{sec_NLE} and \ref{sec_Horn} for some explicit examples (cf. also \cite{Coleyetal08,Gulluetal11,MalPra11prd,Kuchynkaetal19} in the vacuum case). It should be noted that in certain cases the algebraic equations~\eqref{algebr_constr} may not admit a real solution for $(\Lambda_0,\kappa_0)$ in terms of $(\Lambda,\kappa)$.
Moreover, there may be particular degenerate cases for which $b_1$ or $b_2$ become singular, or $b_1=0$, in which case the ``seed'' solution $(\bg,\bF)$ cannot be used to obtain a solution of the modified theory~\eqref{Einst_gen}, \eqref{Maxw_gen} with the method described above (cf. again the examples in sections~\ref{sec_NLE} and \ref{sec_Horn}).}  In the vacuum limit $T_{ab}=0$, condition~\eqref{univ_E} reduces to $E_{ab}=b_2 g_{ab}$, which defines universal spacetimes in the sens of \cite{Coleyetal08}, i.e., those which solve (virtually) any purely gravitational modified theory.

As we shall discuss in section~\ref{sec_4D_null}, in the case of null fields it will ultimately be necessary to relax condition~\ref{cond1} and allow $b_1$ in~\eqref{univ_E} to be a spacetime function. This effectively means that, from the viewpoint of the original Einstein-Maxwell theory, one needs to pick a pair $(\bg,\bF)$ that solves an Einstein equation also containing an additional term that can be interpreted as pure radiation (aligned with the null Maxwell field) -- i.e., one should add the quantitity $(\kappa b_1-\kappa_0)T_{ab}$ to the RHS of~\eqref{Einst} (ultimately affecting only one component of the Einstein equation since $\bF$ is null). Nevertheless, it should be emphasized no pure radiation will be present in the modified theory~\eqref{Einst_gen}, \eqref{Maxw_gen}, which thus remains electrovac. 
A similar approach in the case of modified gravities in vacuum was considered in \cite{Gulluetal11,MalPra11prd,Gursesetal13,Kuchynkaetal19} and for certain electrovac solutions in \cite{GurHeySen21,GurHeySen22}. Similarly as in \cite{Kuchynkaetal19}, solutions of this type can be referred to as {\em almost universal}. That is, an almost universal Einstein-Maxwell solution containing a null $\bF$ and aligned pure radiation solves any modified theory admitting field equations of the form~\eqref{Einst_gen} and \eqref{Maxw_gen}, provided the first of~\eqref{algebr_constr} is satisfied. Such solutions are thus only ``almost'' universal because $b_1$ may be a different function in different theories, which means that the pure radiation term on the Einstein-Maxwell side can be specified only once a particular modified electrovac theory has been chosen. This will be illustrated more explicitly in section~\ref{sec_4D_null} and, by an example, in section~\ref{subsec_Horn_null}

For Lagrangian theories that can be seen (in a sense specified in \cite{KucOrt19}) as higher-order modifications of the Einstein-Maxwell theory, the special configurations $(\bg,\bF)$ for which $b_1=1$ and $b_2=0$ have been fully characterized in \cite{KucOrt19} (in arbitrary dimension). These can be regarded as {\em strongly universal} solutions (using a terminology similar to \cite{Coleyetal08}) since $\Lambda=\Lambda_0$ and $\kappa=\kappa_0$, and corrections to the field equations vanish identically (rather than just being of the special form allowed by~\eqref{univ_E}). However, this leads to a rather restricted class of solutions, namely a specific $\Lambda_0=0$ subset of Kundt spacetimes of Petrov type III (or more special) coupled to a null $\bF$ and admitting a recurrent null vector field (see \cite{KucOrt19} for more details). The present paper will thus extend the analysis of \cite{KucOrt19} in various directions, in particular to include non-null electromagnetic fields, and also null fields in spacetimes of Petrov type~II and D, neither of which were covered by \cite{KucOrt19}.

\subsection{Preliminaries}

\label{subsec_prelim}

First, it is easy to see that~\eqref{univ_E} in condition~\ref{cond1} implies the two electromagnetic invariants 
\be
 I\equiv F_{ab}F^{ab} , \quad J\equiv {}^{*}\!F_{ab}F^{ab} ,
 \label{IJ}
\ee
are constant. This follows, for example, by considering the form of $E_{ab}$ which arises in NLE (given below in~\eqref{E_NLE}) for $L=I^q$ and $L=I+J^q$ ($q\neq0,1$).\footnote{A related result was obtained with a different method in theorem~3.2 of \cite{HerOrtPra18}.}

Next, since the energy-momentum tensor~\eqref{T} is traceless, it follows from the assumption~\eqref{univ_E} that all possible tensors $E_{ab}$ have constant trace (vanishing iff $b_2=0$). This applies, in particular, to tensors $E_{ab}$ constructed out of just the curvature tensor and its covariant derivatives. Thanks to theorem~3.2 of \cite{HerPraPra14}, one can thus conclude that metrics satisfying condition~\ref{cond1} must belong to the class of spacetimes for which all curvature invariants are constant (i.e., CSI spacetimes) \cite{ColHerPel06,ColHerPel09b}.

Furthermore, four-dimensional CSI spacetimes consist of a (proper) subset of {\em degenerate Kundt metrics}\footnote{Degenerate Kundt metrics \cite{ColHerPel09a,Coleyetal09} are defined as a subset of Kundt metrics for which the Riemann tensor and its covariant derivatives of arbitrary order are of algebraic type~II (or more special) aligned with the Kundt vector field $\bl$ (cf. also the review \cite{OrtPraPra13rev}). As it turns out, a Kundt spacetime is degenerate if it is of aligned Riemann type II and $\ell^a R_{,a}=0$ \cite{OrtPra16}. In the following we will also bear in mind the known fact that for Einstein-Maxwell fields in the Kundt class,  $\bl$ is automatically a principal null direction (PND) of $\bF$ and a double PND of the Weyl tensor (see, e.g., chapter~31 of \cite{Stephanibook}). This implies that in the Einstein-Maxwell case a Kundt space is necessarily degenerate and the Petrov type is II or more special.\label{footn_deg}} and of (locally) {\em homogeneous spacetimes} (theorem~2.1 of \cite{ColHerPel09b}). Taking advantage of this simplification, in the rest of the paper we can thus restrict our analysis to these two subclasses, without losing generality. In fact, for degenerate Kundt metrics it will be convenient to consider non-null and null electromagnetic fields separately. Furthermore, since a large amount of results for null fields in degenerate Kundt spacetimes of Petrov type~III, N and O has been already obtained \cite{KucOrt19} (see also \cite{GurHeySen22}), in the null case we shall focus on metrics of type~II (and D), leaving the complete analysis of the types III/N/O for future work. On the other hand, no restriction on the Petrov types will be a priori assumed neither in the non-null degenerate Kundt nor in the homogeneous case. The plan and main results of the paper are thus as follows:

\begin{enumerate}[(i)]

	\item section~\ref{sec_4D_degK_nonnull}: we show that the metric~\eqref{4D_nonnull_metric} with the 2-form~\eqref{4D_nonnull_F} represents the most general {\em universal} solution $(\bg,\bF)$ for which {\em $\bF$ is non-null} and {\em $\bg$ degenerate Kundt}

	\item section~\ref{sec_4D_null}: when {\em $\bF$ is null} and {\em $\bg$ degenerate Kundt}, we first obtain a set of general necessary conditions for $(\bg,\bF)$ to be (almost) universal (section~\ref{subsec_4D_null_gen}) and of Petrov type~II (D), and then a complete characterization of the special subfamily of {\em almost universal} solutions defined by the assumption $D\Psi_4=0$ (section~\ref{subsubsec_rec_spec}) -- this is represented by the metric~\eqref{rec_metric_spec} with the 2-form~\eqref{rec_Phi2}, and $H^{(0)}$ determined by~\eqref{Laplac_null} with \eqref{b1_null}

	\item	section~\ref{sec_homog}: both in the non-null and null cases, universal and almost universal solutions with a {\em homogeneous $\bg$} are shown to fall into the already investigated degenerate Kundt class, which also implies that spacetimes of Petrov type~I cannot occur (recall footnote~\ref{footn_deg}). 

\end{enumerate}

In sections~\ref{sec_NLE} and \ref{sec_Horn} we exemplify the previous results for the specific cases of NLE (in particular Born-Infeld and ModMax theories) and Horndeski's theory, respectively. Finally, in the appendices we prove some technicalities needed in the main body of the paper. To that end, we employ the formalism of Geroch-Held-Penrose \cite{GHP}, which is briefly summarized in appendix~\ref{subsec_GHP}. Then we analyze separately the case of non-null and null fields in appendices~\ref{app_nonnull} and \ref{app_null}, respectively. Some of the results obtained there go beyond the purposes of the present paper and can be useful also for future applications.

\paragraph{Notation}

$\BR$, $\BC$, $\BS$ denote the Riemann and Weyl tensors and the tracefree part of the Ricci tensor, respectively. A real 2-form is denoted by $\BF$ or $\bH$, while a generic spinor (possibly complex) by $\bS$. We will use the abbreviations ``b.w.'' and ``s.w.'' for boost and spin weight, respectively -- these are the standard notions of the Newman-Penrose (NP) \cite{NP} and Geroch-Held-Penrose (GHP) \cite{GHP} formalisms (cf. also \cite{Stephanibook,penrosebook1} and appendix~\ref{subsec_GHP}). An asterisk will denote Hodge duality. A tensor is called a VSI (\underline{v}anishing \underline{s}calar \underline{i}nvariants) tensor if all its scalar polynomial invariants (obtained by contracting polynomials in the metric, the tensor itself and its covariant derivatives of arbitrary order) vanish. A similar terminology is employed for tensors possessing only \underline{c}onstant \underline{s}calar \underline{i}nvariants, referred to as CSI tensors. When a metric or a spacetime is called VSI [CSI], it is actually meant that its Riemann tensor is VSI [CSI].

To express the Maxwell field, it will be often convenient to use the self-dual 2-form \cite{Stephanibook,penrosebook1}
\be
 \mathcal{F}_{ab}=F_{ab}+i{}^{*}\!F_{ab} . \label{Fself}
\ee
Its algebraic complex invariant will be hereafter assumed (by the comments in the paragraph following~\eqref{IJ}) to be constant, i.e.,
\be
 \mathcal{F}_{ab}\mathcal{F}^{ab}=2(I+iJ)=\mbox{const} . 
 \label{complex_inv}
\ee
In terms of $\cF$, the energy-momentum tensor~\eqref{T} takes the compact form
\be
 T_{ab}=\frac{1}{2}\mathcal{F}_{ac}\bar{\mathcal{F}}_b^{\phantom{b}c} . 
 \label{T_2}
\ee

We will be using the complex NP formalism \cite{NP}, with the conventions of \cite{Stephanibook}. In a complex frame $(\bl,\bn,\mbox{\boldmath{$m$}},\mbox{\boldmath{$\bar m$}})$, the metric reads 
\be
	g_{ab}=2m_{(a}\bar m_{b)}-2\ell_{(a}n_{b)} ,
	\label{g_tetrad}
\ee
while the form of $\cF$ and of the Maxwell equation~\eqref{Maxw} will be given in sections~\ref{sec_4D_degK_nonnull} and \ref{sec_4D_null} in the non-null and null case, respectively. In the above frame, the tracefree part of the Einstein equation~\eqref{Einst} reads $\Phi_{ij}=\kappa_0\Phi_{i}\bar\Phi_{j}$ ($i,j=0,1,2$), while its trace gives $R=4\Lambda_0$.

\section{Degenerate Kundt: non-null fields}

\label{sec_4D_degK_nonnull}

\subsection{Necessary conditions}

In a frame adapted to the two null PNDs of $\F$, one has \cite{Stephanibook}
\be
 \mathcal{F}_{ab}= 4\Phi_1 (m_{[a}\bar m_{b]}-\ell_{[a}n_{b]}) ,
 \label{F1_0}
\ee 
so that $\bl$ and $\bn$ are repeated PNDs of the energy-momentum tensor
\be
 T_{ab}= 4\Phi_1\bar\Phi_1(m_{(a}\bar m_{b)}+\ell_{(a}n_{b)}) ,
 \label{T_nonnull}
\ee 
and the only non-zero component of the traceless Ricci tensor is $\Phi_{11}=\kappa_0\Phi_1\bar\Phi_1$. Since $\mathcal{F}_{ab}\mathcal{F}^{ab}=-16\Phi_1^2$, by~\eqref{complex_inv} one gets
\be
 \Phi_1=\mbox{const}\neq0 .
 \label{Phi1_const}
\ee

At least one of the PNDs of $\cF$ is Kundt (footnote~\ref{footn_deg}) -- for definiteness, let's say it is $\bl$. We thus have
\be
  \Psi_0=\Psi_1=0 , \qquad \kappa=\rho=\sigma=\epsilon=0 , 
	\label{degK}
\ee
where the last equality can be ensured by exploiting a freedom of boosts and spins, without loss of generality.

With the above conditions, Maxwell's equation reduce to (cf., e.g., \cite{Stephanibook}) 
\be
 \pi=\tau=\mu=0 .
\label{ptm}
\ee
This implies that $\bl$ is recurrent and that the frame in use is parallelly transported along it. A further rescaling enables one to set also (without affecting the previous conditions)
\be
  \alpha+\bar\beta=0 , 
	\label{ab}
\ee
such that $\bl$ becomes a gradient.

By condition~\eqref{univ_E} with~\eqref{T_nonnull}, the following symmetric 2-tensor\footnote{A tensor of this form appears, for example, in the Einstein equation of Horndeski's electrodynamics, cf.~\eqref{E_Horn} below, as can be seen using the identity (recall~\eqref{Fself}) $\nabla_d\,{}^{*}\!F_{ac}\nabla^c\,{}^{*}\!F_{b}^{\ d}=\frac{1}{4}(\nabla_d\mathcal{F}_{ac}\nabla^c\mathcal{\bar F}_{b}^{\ d}-\nabla_d\mathcal{F}_{ac}\nabla^c\mathcal{F}_{b}^{\ d}+\mbox{c.c.})$. We note that here $\nabla_d\mathcal{F}_{ac}\nabla^c\mathcal{F}_{b}^{\ d}=0$ (thanks to~\eqref{F1_0}, \eqref{degK}, \eqref{ptm}), and the additional term $F^{ce}F^{d}_{\phantom{d}e}\,{}^{*}\!R^*_{acbd}$ contained in~\eqref{E_Horn} does not affect the present discussion since it does not possess a component proportional to $\ell_a \ell_b$, as can be seen using~\eqref{F1_0}, the fact that the traceless Ricci tensor must obey $\Phi_{11}=\kappa_0\Phi_1\bar\Phi_1$ (thus possessing only components of b.w.~0) and ${}^{*}\!R^*_{acbd}=-C_{acbd}+\ldots$, where the dots denote terms constructed from the Ricci tensor and the metric.\label{footn_Hornd}}
\be
	\nabla_d\mathcal{F}_{(a|c}\nabla^c\mathcal{\bar F}_{|b)}^{\ d}=16|\Phi_1|^2 |\lambda|^2\ell_a \ell_b ,
	\label{dFdbF}
\ee	
must vanish in order for $\F$ to be universal, so that
\be
 \lambda=0 .
\label{l}
\ee
We further note that the Ricci and Bianchi identities give \cite{Stephanibook}
\be
 \Psi_2=-\frac{\Lambda_0}{3} , \qquad \Psi_3=0 , \qquad D\Psi_4 =0 , \qquad D\nu=0 . 
 \label{nonnull_NP}
\ee

We have now enough information to introduce adapted coordinates and arrive at the explicit form of the line element. We already observed that $\bl$ is a gradient, while eqs.~\eqref{degK}, \eqref{ptm}, \eqref{l} and \eqref{ab} further ensure $\bbm\wedge\d\bbm=0$, $\bl\wedge\d\bn=0$, $\bl\wedge\d\bbm=-2\beta\bl\wedge\bbm\wedge\tbbm$. We can thus define coordinates $(u,r,\zeta,\bar\zeta)$ such that 
\beqn
  & & \bl=-\d u , \qquad \bbm=P^{-1}\d \bar\zeta , \qquad \bn=-(\d r+W\d\zeta+\bar W\d \bar\zeta+H\d u) , \\
	& & P_{,r}=0 , \qquad W_{,r}=0 , \qquad W_{,\bar\zeta}-\bar W_{,\zeta}=0 . \label{PW}
\eeqn
An $r$-independent spin can be used to make $P$ real (without affecting the previously obtained conditions), while imposing $\mu=0$ (cf.~\eqref{ptm}) additionally gives
\be
 P_{,u}=0 .
\ee 
The latter condition also ensures that Maxwell's equation is satisfied. Thanks to the last of \eqref{PW}, a coordinate transformation $r\mapsto r+g(u,\zeta,\bar\zeta)$ can be used to set 
\be
	W=0 .
\ee

We have thus arrived at a special subcase of the canonical Kundt line-element (cf. \cite{Stephanibook} for more details). The component $(\zeta\bar\zeta)$ of Einstein's equations gives
\be
	H=-\frac{1}{2}k_1r^2+rH^{(1)}(u,\zeta,\bar\zeta)+H^{(0)}(u,\zeta,\bar\zeta) , \qquad k_1=\Lambda_0-2\kappa_0\Phi_1\bar\Phi_1 .
	\label{H_nonnull}
\ee
The component $(ur)$ reads $\T\ln P=\Lambda_0+2\kappa_0\Phi_1\bar\Phi_1$ (where $\T=2P^2\pa_{\bar\zeta}\pa_\zeta$ is the Laplace operator in the transverse 2-space spanned by $(\zeta,\bar\zeta)$), which enables one to redefine the coordinates $(\zeta\bar\zeta)$ such that
\be
 P=1+\frac{k_2}{2}\zeta\bar\zeta , \qquad k_2=\Lambda_0+2\kappa_0\Phi_1\bar\Phi_1 .
 \label{P_nonnull}
\ee
The equation $(u\zeta)$ gives $H^{(1)}=H^{(1)}(u)$, which guarantees that we can redefine $u\mapsto U(u)$, $r\mapsto r/\dot U$ to achieve
\be
 H^{(1)}=0 .
\ee
Lastly, the equation $(uu)$ requires $\T H^{(0)}=0$ and therefore
\be
 H^{(0)}=h(u,\zeta)+\bar h(u,\bar \zeta) .
\ee

The line-element thus finally reads
\be
 \d s^2=-2\d u\d r-2\left[-\frac{1}{2}k_1r^2+h(u,\zeta)+\bar h(u,\bar \zeta)\right]\d u^2+\frac{2\d \zeta\d \bar\zeta}{\left(1+\frac{k_2}{2}\zeta\bar\zeta\right)^2} ,
 \label{4D_nonnull_metric}
\ee
and the Maxwell field~\eqref{F1_0} becomes
\be
 \cF= 2\Phi_1 \left[\frac{\d \bar\zeta\wedge\d \zeta}{\left(1+\frac{k_2}{2}\zeta\bar\zeta\right)^2}-\d u\wedge\d r\right] ,
 \label{4D_nonnull_F}
\ee 
together with~\eqref{Phi1_const} and (from \eqref{H_nonnull}, \eqref{P_nonnull}), $k_1=\Lambda_0-2\kappa_0\Phi_1\bar\Phi_1$, $k_2=\Lambda_0+2\kappa_0\Phi_1\bar\Phi_1$. The relative signs of $k_1$, $k_2$ and $\Lambda_0$ must be such that $\Phi_1\bar\Phi_1>0$ \cite{Bertotti59,OrtPod02}.

\subsection{Sufficiency of the conditions}

\label{subsec_nonnull_suff}

In the above section we have obtained a set of necessary conditions for an Einstein-Maxwell solution to be universal, which led to the solution~\eqref{4D_nonnull_metric}, \eqref{4D_nonnull_F}. Let us now show that those conditions are also sufficient, i.e., that the solution~\eqref{4D_nonnull_metric}, \eqref{4D_nonnull_F} is indeed {\em universal}.

First, we observe that the curvature tensor contains only components of b.w.~0 (the Weyl $\Psi_2$ and the Ricci $R=4\Lambda_0$ and $\Phi_{11}=\kappa_0\Phi_1\bar\Phi_1$) and $-2$ (i.e., $\Psi_4$), while $\cF$ only components of b.w.~0 (i.e., $\Phi_1$) -- in both cases, those of b.w.~0 are constant. Furthermore, the covariant derivatives of $\cF$ and of the energy-momentum tensor (which is proportional to the tracefree part of the Ricci tensor) are of the form 
\be
 \nabla_c\mathcal{F}_{ab}=8\nu\Phi_1\ell_c\ell_{[a}m_{b]} , \qquad \nabla_c T_{ab}=-8\Phi_1\bar\Phi_1\ell_c\ell_{(a}(\nu m_{b)}+\bar\nu\bar m_{b)}) . 
 \label{der_F_T}
\ee
Together with~\eqref{Phi1_const} and \eqref{nonnull_NP}, this means that both tensors \eqref{der_F_T} are {\em 1-balanced} tensors (as defined in \cite{HerPraPra14}, cf. also appendix~\ref{app_nonnull}). One can similarly show that also the covariant derivative of the Weyl tensor is 1-balanced (a proof of this statement can be found in section~4.1 of \cite{HerPraPra17}), therefore the covariant derivative of the full Riemann tensor is 1-balanced as well. This implies that covariant derivatives of arbitrary order of both $\bR$ and $\cF$ possess only components of b.w.~$\le-2$ (see lemma~A.7 of \cite{HerOrtPra18} and the proof of proposition~2.9 in the same reference, and \cite{Pravdaetal02,Coleyetal04vsi,HerPraPra14,Herviketal15,OrtPra16,HerPraPra17} for several related earlier results). 
Therefore, any possible $E_{ab}$ can only contain terms of b.w.~0 and $-2$, while possible $H_{ab}$ can only be of b.w.~0 (since a 2-form cannot have components of b.w.~$\le-2$).

All the components of b.w.~0 are the same as those of the ``background'' direct-product spacetime defined by setting $h=0$ in \eqref{4D_nonnull_metric} (cf. \cite{ColHerPel10,HerOrtPra18}) and therefore are invariant under the symmetries of the latter (cf. \cite{HerOrtPra18} for related comments).
This means that components of b.w.~0 of any possible tensor that can be constructed out of $\bR$ and $\cF$ (their covariant derivatives cannot contribute since have b.w. $\le-2$) will still admit the same symmetries -- in particular, both their boost and spin weights will be zero. Therefore, the b.w.~0 part of any possible symmetric 2-tensor $E_{ab}$ will be given by a linear combination, with constant coefficients, of $g_{ab}$ and $T_{ab}$ (in agreement with~\eqref{univ_E}), while any possible 2-form (necessarily of b.w.~0, as observed above) will consist of a linear combination of the 2-volume elements of the two factor spaces. It is easy to check that any such 2-form is necessarily closed and co-closed \cite{HerOrtPra18}, therefore the generalized Maxwell equation~\eqref{Maxw_gen} is also automatically satisfied.

We can thus hereafter focus only on the possible b.w.~$-2$ components of $E_{ab}$. These are proportional to $\ell_a\ell_b$ and thus have s.w.~0. Let us first note that the b.w.~$-2$ part of $\bR$ cannot contribute to those, since it is traceless and has s.w.~$\mp2$ (thus any symmetric 2-tensor obtained by contractions of the b.w.~$-2$ part of $\bR$ with the b.w.~0 parts of $\bR$ and $\cF$ is necessarily zero -- cf. \cite{Coleyetal08} for related comments). Thanks to this and to the previous observations, b.w.~$-2$ components of $E_{ab}$ must thus contain terms linear in the covariant derivatives of either $\bR$ or $\cF$, suitably contracted with certain tensor components of b.w.~0 and s.w.~0. Therefore, only terms of the covariant derivatives of $\bR$ and $\cF$ possessing b.w.~$-2$ {\em and} s.w.~0 can give rise to b.w.~$-2$ components of $E_{ab}$. However, one can show iteratively that such terms vanish for covariant derivatives of arbitrary order, and thus the b.w.~$-2$ components of $E_{ab}$ are identically zero. Such a proof can be found in appendix~\ref{app_appl_nonnull}.

To summarize, we have shown that the family of Einstein-Maxwell solutions $(\bg,\bF)$ given by~\eqref{4D_nonnull_metric} and \eqref{4D_nonnull_F} are universal in the sense defined in section~\ref{subsec_backgr}. It also follows from the above derivation that they are the unique universal solutions when $\cF$ is non-null.

Let us recall that in the Einstein-Maxwell theory solutions~\eqref{4D_nonnull_metric}, \eqref{4D_nonnull_F} represent non-expanding gravitational waves propagating in the Levi-Civita--Bertotti--Robinson, charged (anti-)Nariai and Pleba\'nski-Hacyan direct product universes \cite{LeviCivita17BR,Nariai51,Bertotti59,Robinson59,CahDef68,PlebHac79}, to which they reduce for $h=0$. They were first found in \cite{GarAlvar84,Khlebnikov86} and further studied in \cite{OrtPod02,PodOrt03,OrtAst18} (see also \cite{Lewand92,Ortaggio02} in the vacuum limit $\Phi_1=0$). They are of Petrov type II(D) iff $\Lambda_0\neq0$ (i.e., iff $k_1\neq-k_2$) and of type N(O) otherwise. Since for these solutions the covariant derivatives of $\cF$ and $\bR$ are both 1-balanced (as noticed above), they cannot be used to construct any invariants. Furthermore, the frame components of b.w.~0 of both $\bR$ and $\cF$ are constant ($R=4\Lambda_0$ and \eqref{Phi1_const} with the first of \eqref{nonnull_NP}), which enables one to conclude that no non-constant invariants can be constructed, neither from $\bR$ nor from $\cF$ (not even mixed ones) -- in particular demonstrating that the metric is CSI (cf. section~\ref{subsec_prelim}).

\section{Degenerate Kundt (Petrov type II): null fields}

\label{sec_4D_null}

\subsection{On the general class}

\label{subsec_4D_null_gen}

In this case in an adapted frame one has \cite{Stephanibook}
\be
 {\mathcal F}_{ab}=4\Phi_2 \ell_{[a}m_{b]} ,
\label{F_null}
\ee
giving
\be
 T_{ab}= 2\Phi_2\bar\Phi_2\ell_a\ell_b ,
 \label{T_null_gen}
\ee 
where the unique PND $\bl$ of $\cF$ is necessarily Kundt and a multiple PND of the Weyl tensor (cf. footnote~\ref{footn_deg}), i.e., 
\be
  \Psi_0=0=\Psi_1 .
	\label{null_bw21}
\ee
In a parallelly transported frame adapted to $\bl$, one has
\be
 \kappa=\rho=\sigma=\pi=\epsilon=0 .
 \label{degK_null}
\ee

Maxwell's equation thus takes the form \cite{Stephanibook} 
\be
 D\Phi_2=0 , \qquad \delta\Phi_2=(\tau-2\beta)\Phi_2 .
\ee

It follows that $\cF$ is a {\em balanced} tensor (as defined in \cite{Pravdaetal02,Coleyetal04vsi}, cf. also appendix~\ref{app_null}) together with its covariant derivatives of arbitrary order, and therefore it is VSI \cite{OrtPra16}. 

Moreover, by the assumptions (cf. section~\ref{subsec_prelim}), the metric is VSI iff $\Lambda_0=0=\Psi_2$ \cite{Pravdaetal02}, CSI otherwise -- in both cases, $\Psi_2=$const. From now on we will restrict ourselves to spacetimes of Petrov type~II and D, i.e., $\Psi_2\neq0$ will be assumed hereafter. Partial results for the types III, N and O can be found in \cite{KucOrt19,GurHeySen22}.

In a Kundt CSI spacetime of Weyl type II and traceless-Ricci type N, a subset of the Ricci and Bianchi identities gives \cite{Stephanibook} (using \eqref{null_bw21}, \eqref{degK_null})
\beqn
 & & D\gamma=\tau\alpha+\bar\tau\beta+\Psi_2-R/24 , \qquad D\nu=\bar\tau\mu+\tau\lambda+\Psi_3 , \qquad D\mu=\Psi_2+R/12 , \label{Ricci4D_2} \\
 & & D\alpha=0=D\beta , \qquad D\lambda=0 , \qquad \bar\delta\tau=-(\bar\beta-\alpha-\bar\tau)\tau+\Psi_2+R/12 \label{Ricci4D} \\
 & & \tau\Psi_2=0  , \qquad \bar\delta\Psi_3-D\Psi_4=-2\alpha\Psi_3+3\lambda\Psi_2 , \qquad D\Psi_3=0  , \qquad D\Phi_{22}=0 . \label{Bianchi4D}
\eeqn

The latter equation suffices to show that $\BS$ and its covariant derivatives of arbitrary order are 1-balanced  (cf. also lemma~B.4 of \cite{KucOrt19} for a more general result). Using also the commutator $[\delta,D]=(\bar\alpha+\beta)D$ \cite{Stephanibook}, from the second of \eqref{Bianchi4D} one finds
\be
 D^2\Psi_4=0 .
 \label{D2Psi4}
\ee

Since for type II one has $\Psi_2\neq0$, the first of \eqref{Bianchi4D} implies
\be
 \tau=0 ,
\label{tau=0}
\ee 
which means that $\bl$ is recurrent.

Then \eqref{Ricci4D_2}--\eqref{Bianchi4D} further give
\be
 \Psi_2+R/12=0 , \qquad D^2\gamma=0 , \qquad D^2\nu=0 , \qquad D\mu=0 ,
 \label{null_typeII}
\ee
the first of which also reads alternatively $\Psi_2=-\Lambda_0/3\neq0$.

Following the same steps as in \cite{HerPraPra17} one can therefore argue that $\nabla^k\BC$ is balanced for any $k\ge1$ (while $\BC$ is not, since of type II). Furthermore, we note that for any $k\ge1$ $\nabla^k\BC$ is 1-balanced (and thus $\nabla^k\bR$ is 1-balanced since $\BS$ also is, see above) iff $D\Psi_4=0$, which thus characterizes a geometrically privileged subcase (to be analyzed in section~\ref{subsubsec_rec_spec}).

Recalling that the spacetime in question must be recurrent Kundt and solve the Einstein-Maxwell equations sourced by~\eqref{F_null} (and thus necessarily be aligned with $\bl$ \cite{Stephanibook}), the metric can be written as \cite{Walker50,LerMcL73,Stephanibook}
\be
 \d s^2=-2\d u\left(\d r+W\d\zeta+\bar W\d\bar\zeta+H\d u\right)+2P^{-2}\d\zeta\d\bar\zeta ,
\label{rec_metric}
\ee
with \cite{LerMcL73}\footnote{A different gauge choice is possible such that $H^{(1)}=0$ (at the price of changing the form of $W$), cf. \cite{Lewand92,Kadlecovaetal09}.}
\beqn
  & & W=P^{-2}\bar g(u,\bar\zeta) , \qquad  H=-r^2\frac{\Lambda_0}{2}+rH^{(1)}+H^{(0)} , \qquad P=1+\frac{\Lambda_0}{2}\zeta\bar\zeta \label{W_rec} \\
	& & 2H^{(1)}=-\Lambda_0P^{-1}\left(\zeta\bar g+\bar\zeta g\right)+g_{,\zeta}+\bar g_{,\bar \zeta} , \label{H1_rec}
\eeqn
and the Maxwell field~\eqref{F_null} is given by
\be
 \cF=-2\bar f(u,\bar\zeta)\d u\wedge\d\bar\zeta , 
 \label{rec_Phi2}
\ee
where the complex functions $g$ and $f$ are arbitrary. In a null tetrad \cite{Stephanibook} 
\be
 \bl=-\d u , \qquad \bbm=P^{-1}\d \bar\zeta-PW\d u , \qquad \bn=-\d r-(H+P^2W\bar W)\d u , 
\ee
eq.~\eqref{rec_Phi2} corresponds to~\eqref{F_null} with 
\be
 \Phi_2=P\bar f(u,\bar\zeta) .
\ee

We also note that the form~\eqref{W_rec}  of $W$ implies 
\be
 \lambda=0 .
 \label{lambda=0}
\ee

Eqs.~\eqref{rec_metric}--\eqref{rec_Phi2} ensure that all components of the Einstein equation~\eqref{Einst} are satisfied, except for the one along $\ell_a\ell_b$, i.e., the component $\Phi_{22}=\kappa_0\Phi_2\bar\Phi_2$ in NP notation -- hence at this stage also pure radiation aligned with $\cF$ is present. If one imposes also $\Phi_{22}=\kappa_0\Phi_2\bar\Phi_2$, then the real function $H^{(0)}=H^{(0)}(u,\zeta,\bar\zeta)$ must obey a second order linear PDE that in GHP notation can be written compactly as (using~\eqref{degK_null}, \eqref{tau=0}, \eqref{lambda=0})\footnote{Hereafter the symbols $\thorn$, $\thorn'$, $\eth$ and $\eth'$ denote the standard derivative operators defined in the compact GHP formalism \cite{GHP}, which is reviewed in appendix~\ref{subsec_GHP}. The explicit form of~\eqref{Einst-2_null} in terms of ordinary partial derivatives, which will not be needed in the rest of the paper, can be found in \cite{LerMcL73,Stephanibook}. For the sake of definiteness, let us only mention here that, in the spacetime~\eqref{rec_metric}--\eqref{H1_rec}, the function $H^{(0)}$ appears in \eqref{Einst-2_null} only through the term $\eth\nu=\frac{1}{2}\T H^{(0)}+\ldots$, where where $\T=2P^2\pa_{\bar\zeta}\pa_\zeta$ is the Laplace operator in the transverse 2-space spanned by $(\zeta,\bar\zeta)$, and the ellipsis denotes quantities which vanish when $g=0$ (as will be relevant in section~\ref{subsubsec_rec_spec}).\label{foot_GHP}} 
\be
 \eth\nu-\thorn'\mu-\mu^2=\kappa_0\Phi_2\bar\Phi_2 .
 \label{Einst-2_null}
\ee
However, in the following we will {\em not} impose~\eqref{Einst-2_null} and thus $H^{(0)}$ will remain, for now, unconstrained. The reason for this is related to the comments about almost universal solution given in section~\ref{subsec_backgr}, as will be made more explicit in section~\ref{subsubsec_rec_spec}.

By setting $g=f=H^{(2)}=0$ in \eqref{rec_metric}--\eqref{rec_Phi2} one obtains the (anti)-Nariai vacuum ``background'' \cite{Nariai51}, which is a direct product of two 2-dimensional spaces with identical Gaussian curvature and therefore \cite{Herviketal15} a universal spacetime. The b.w.~0 part of the curvature tensor (with components $\Psi_2$ and $R$) of the full metric \eqref{rec_metric} is the same as the one of the (anti)-Nariai background (since it does not contain the functions $W$, $H^{(1)}$ and $H^{(2)}$, as noted in \cite{ColHerPel10}). By the result of \cite{Herviketal15} just mentioned, this means that components of $E_{ab}$ of b.w.~0 reduce to terms proportional to the metric and are thus harmless. It also follows that no components of $H_{ab}$
of b.w.~0 (which could only come from the curvature tensor, since $\cF$ is balanced) are possible -- otherwise the ``square'' of such terms would produce a symmetric 2-tensor not proportional to the metric, contradicting \cite{Herviketal15}. One can thus focus on components of $E_{ab}$ and $H_{ab}$ of negative b.w.. For simplicity, in the rest of this section we will restrict ourselves to the special subcase (identified above) with $D\Psi_4=0$. We recall that even in the vacuum case a conclusive answer for the generic case $D\Psi_4\neq0$ has not been obtained yet \cite{HerPraPra17} -- this will thus deserve a separate investigation.

\subsection{Special subclass $D\Psi_4=0$}

\label{subsubsec_rec_spec}

\subsubsection{Necessary conditions}

\label{subsubsec_necessary_null}

From now on we thus assume the additional condition $D\Psi_4=0$. It is not difficult to see \cite{LerMcL73} that for the metric \eqref{rec_metric}--\eqref{H1_rec} one has $D\Psi_4=0\Leftrightarrow g_{,\zeta\zeta\zeta}=0$, so that here we can take
\be
 g=a_0(u)+a_1(u)\zeta+a_2(u)\zeta^2 . 
\ee

With \eqref{W_rec} this gives
\be 
	\left(1+\frac{\Lambda_0}{2}\zeta\bar\zeta\right)^3(W_{,\bar\zeta}-\bar W_{,\zeta})=\left(1-\frac{\Lambda_0}{2}\zeta\bar\zeta\right)(\bar a_1-a_1)+\bar\zeta(2\bar a_2+\Lambda_0 a_0)-\zeta(2a_2+\Lambda_0\bar a_0) .
\ee	

Next, a transformation of the form\footnote{This is discussed thoroughly in \cite{LerMcL73} in the context of type~D spacetimes. Note, however, that this part of the analysis of \cite{LerMcL73} applies also to spacetimes of type~II provided they satisfy $D\Psi_4=0$, therefore there is no need to repeat here all the details.} 
\be
 \zeta'=\frac{c(u)\zeta+d(u)}{\bar c(u)-\frac{\Lambda_0}{2}\bar d(u)\zeta}  \qquad \left(c\bar c+\textstyle{\frac{\Lambda_0}{2}}d\bar d\neq0\right) ,
\label{Moebius}
\ee
under which $\left(1+\frac{\Lambda_0}{2}\zeta\bar\zeta\right)^{-2}\d\zeta\d\bar\zeta=\left(1+\frac{\Lambda_0}{2}\zeta'\bar\zeta'\right)^{-2}\d\zeta'\d\bar\zeta'$, can be used to set $\bar a_1'-a_1'=0$ and $2\bar a_2'+\Lambda_0 a_0'=0$. In the new coordinates one thus has (after dropping the primes) $W=\varphi_{,\zeta}$, where $\varphi=\left(1+\frac{\Lambda_0}{2}\zeta\bar\zeta\right)^{-1}(\bar a_0\zeta+a_0\bar\zeta+a_1\zeta\bar\zeta)$ is a real function. This enables one to make a further transformation $r'=r+\varphi$ to set $W'=0$, i.e., $g'=0$ \cite{LerMcL73,Stephanibook}. This also gives $H'^{(1)}=0$ (recall \eqref{H1_rec}) and the final form of the metric is thus (dropping again the primes)
\be
 \d s^2=-2\d u\d r+\left(r^2\Lambda_0-2H^{(0)}\right)\d u^2+\frac{2\d\zeta\d\bar\zeta}{\left(1+\frac{\Lambda_0}{2}\zeta\bar\zeta\right)^2} ,
\label{rec_metric_spec}
\ee
with \eqref{rec_Phi2} unchanged (up to suitably redefining $f$ after \eqref{Moebius}).

Let us also note that one now has  	
\be 
	\mu=0 , \qquad \Psi_3=0 ,
\ee	
which will be useful in the following (these conditions can be obtained from the general expressions given in \cite{Stephanibook}).

As mentioned in the paragraph below~\eqref{lambda=0}, an Einstein-Maxwell solution without pure radiation would correspond to determining $H^{(0)}$ from \eqref{Einst-2_null}. For $W=0=H^{(1)}$ the latter simply becomes $\T H^{(0)}=2\kappa_0\Phi_2\bar\Phi_2$, whose solution reads \cite{PodOrt03}
\be
  H^{(0)}=\kappa_0\int f\d\zeta\int\bar f\d\bar\zeta+ h(u,\zeta)+\bar h(u,\bar \zeta) ,
 \label{H0_null}
\ee
with $h(u,\zeta)$ arbitrary. We will, however, allow for arbitrary aligned pure radiation in the Einstein-Maxwell theory and thus will {\em not} assume~\eqref{H0_null} but rather keep $H^{(0)}$ unspecified at this stage (it will be fixed in a theory-dependent way by~\eqref{Laplac_null}, see the comments following it). In passing, let us note that for the metric~\eqref{rec_metric_spec}, apart from $\Psi_2=-\Lambda_0/3$, the only other Weyl component is given by $\Psi_4=P(PH^{(0)})_{,\bar\zeta\bar\zeta}$.

\subsubsection{Sufficiency of the conditions}

\label{subsubsec_null_suff}

We have obtained above a set of necessary conditions for an Einstein-Maxwell solution with a null field and $D\Psi_4=0$ to be universal or almost universal, which led to the solution~\eqref{rec_metric_spec}, \eqref{rec_Phi2}. To be precise, the latter is an Einstein-Maxwell solution only once $H^{(0)}$ is taken as in \eqref{H0_null} -- for a generic $H^{(0)}$, an additional term representing aligned pure radiation is also present (as already mentioned in section~\ref{subsec_backgr}). We shall now argue that the necessary conditions obtained in section~\ref{subsubsec_necessary_null} are also sufficient for \eqref{rec_metric_spec}, \eqref{rec_Phi2} to describe an {\em almost universal} solution. We emphasize again that a difference with respect to the universal solutions of section~\ref{sec_4D_degK_nonnull} is that here the function $H^{(0)}$ cannot be be fixed a priori in the Einstein-Maxwell theory (i.e., as in \eqref{H0_null}) but can be specified only once a particular theory has been chosen (whereby the solution \eqref{rec_metric_spec}, \eqref{rec_Phi2} is only ``almost'' universal -- cf. \cite{Gulluetal11,MalPra11prd,Kuchynkaetal19,GurHeySen21,GurHeySen22} for a similar approach in other contexts).

As noticed at the end of section~\ref{subsec_4D_null_gen}, we only need study components of negative b.w.. Since here $\nabla^k\bR$ is 1-balanced (for $k\ge1$) and $\cF$ is balanced (cf. section~\ref{subsec_4D_null_gen}), components of $\bH$ can only be constructed linearly in terms of $\cF$ and its covariant derivatives (and the complex conjugates of those). As it turns out, however, any such term necessarily has $s\neq0$ if $b=-1$ (this is proven in appendix~\ref{app_appl_null}), which means that the divergence of such terms in necessarily zero (since a vector with $b=-1$ can only have $s=0$) and therefore $\nabla_b{H}^{ab}=0$ identically, as we needed to ensure.

Concerning possible components of $E_{ab}$ of negative b.w., one has to consider separately those of b.w. $-1$ (which may admit $s=\pm1$) and $-2$ (with $s=0$). The former can only consist of terms linear in $\cF$, its covariant derivatives and their complex conjugates (appropriately contracted with the metric and/or the Weyl tensor -- only the b.w. 0 part of the latter giving a non-zero contribution). The latter may be constructed out of terms quadratic in $\cF$ and its covariant derivatives (namely, products of two terms of b.w. $-1$)\footnote{To be precise, by a term quadratic in $\cF$ we actually mean a product $\cF\bar\cF$, which indeed has $(-2,0)$ if $\cF$ has $(-1,-1)$ (and similarly for terms quadratic in the covariant derivatives).} or terms linear in $\BS$ or linear in the covariant derivatives of $\cF$ or of $\BR$ (both in general contain also terms of b.w. $-2$). We note that terms linear in the Weyl tensor cannot contribute here, since those with $b=-2$ have necessarily $s=\pm2\neq0$. 

Let us first show that components of $E_{ab}$  of b.w. $-1$ are identically zero. These consist of terms linear in (covariant derivatives of) $\cF$ contracted with arbitrary powers of the b.w. 0 part of the curvature tensor, resulting in a term with weights $(-1,-1)$ which is, in addition, invariant under the Sachs symmetry \cite{GHP,penrosebook1} $\bl\mapsto\bbm$, $\bbm\mapsto-\bl$, $\bar\bbm\mapsto\bn$, $\bn\mapsto-\bar\bbm$  (as follows from the discussion in appendix~\ref{app_appl_null}). The only 2-tensor with such properties is (up to an overall factor) $\cF$ itself, which is antisymmetric and cannot thus contribute to components of $E_{ab}$ of b.w. $-1$, as we wanted to prove. A similar discussion applies to terms with weights $(-1,+1)$ constructed out of $\bar\cF$ (cf. again appendix~\ref{app_appl_null} and, in particular, footnote~\ref{footn_Sachs}).

As for components of $E_{ab}$ of b.w. $-2$, they must necessarily have $s=0$, since they can only be terms proportional to $\ell_a\ell_b$. It follows from the discussion in appendix~\ref{app_appl_null} that terms linear in the covariant derivatives of $\cF$ cannot contribute, since those do not possess components with $(b,s)=(-2,0$). Terms quadratic in $\cF$ and its covariant derivatives do instead contribute, as we now detail. Obviously the only possible term quadratic in $\cF$ is $\mathcal{F}_{ac}\mathcal{\bar F}_b^{\ c}=4\Phi_2\bar\Phi_2\ell_a \ell_b$. When considering terms quadratic in $\nabla^k\cF$ it is useful to first note that\footnote{Cf. footnote~\ref{foot_GHP} and appendix~\ref{subsec_GHP} for details on the GHP notation. Let us only mention here that in the spacetime~\eqref{rec_metric_spec} one has $\eth'\Phi_2=(P\Phi_2)_{,\bar\zeta}=(P^2\bar f)_{,\bar\zeta}$ (which is necessarily non-zero since $\Lambda_0\neq0\neq f$), and higher-order derivatives can be expressed as  $\eth'^{k+1}\Phi_2=P^{-k}(P^{k+1}\eth'^k\Phi_2)_{,\bar\zeta}$.\label{foot_GHP_2}} 
\be
	\nabla_c\mathcal{F}_{ab}=-\mathcal{F}_{ab}\Phi_2^{-1}(k_c\thorn'-m_c\eth')\Phi_2 .
	\label{der_F_null}
\ee	
From appendix~\ref{app_appl_null} it follows that, in the product of a term linear in $\nabla^k\cF$ with a term linear in $\nabla^h\bar\cF$, products of components with weights, respectively, $(-1,-1)$ and $(-1,+1)$ (possible only if $k$ and $h$ are both even) will be either zero or again proportional (via a power of $\Lambda_0$) to $\Phi_2\bar\Phi_2\ell_a \ell_b$. For example, using~\eqref{der_F_null} one finds
\be
\mathcal{F}_{ac}\Box\mathcal{\bar F}_b^{\ c}=8\Lambda_0\Phi_2\bar\Phi_2\ell_a \ell_b  , \qquad \Box\mathcal{F}_{ac}\Box\mathcal{\bar F}_b^{\ c}=16\Lambda_0^2\Phi_2\bar\Phi_2\ell_a \ell_b .
\ee
However, $\nabla^k\cF$ may also contain components with $(-1,s)$ and $s<-1$, which are of the form ~\eqref{DF-1s} and thus give rise also to components $(-2,0)$ proportional to $\eth'^j\Phi_2\eth^j\bar\Phi_2$ in products $\nabla^k\cF\nabla^h\bar\cF$ (with $0\le j\le\mbox{min}\{k,h\}$ and $j,k,h$ having the same parity). One has, for example (cf. Lemma~D.3 of \cite{KucOrt19}), 
\be
	\nabla_d\mathcal{F}_{ac}\nabla^d\mathcal{\bar F}_b^{\ c}=4\eth'\Phi_2\eth\bar\Phi_2\ell_a \ell_b , 
\ee	
and
\be
	\nabla_d\mathcal{F}_{ac}\Box(\nabla^d\mathcal{\bar F}_b^{\ c})=20\Lambda_0\eth'\Phi_2\eth\bar\Phi_2\ell_a \ell_b  , \qquad  \nabla_e\nabla_d\mathcal{F}_{ac}\nabla^e\nabla^d\mathcal{\bar F}_b^{\ c}=4(\eth'^2\Phi_2\eth^2\bar\Phi_2+\Lambda_0^2\Phi_2\bar\Phi_2)\ell_a \ell_b .
\ee

Finally, theorem~\ref{th_T_null} in appendix~\ref{app_appl_null} implies that terms linear in the traceless Ricci tensor or in the covariant derivatives of the Riemann tensor can only contribute to $E_{ab}$ with a linear combination of terms of the form $\T^k\Phi_{22}$ for $k\ge0$ (or, equivalently, $\eth'^k\eth^{k}\Phi_{22}$). 
For example one finds
\beqn
 & & C_{abde;cf}C^{bfde}=4\Lambda_0^2\ell_a\ell_c\Phi_{22} , \qquad C_{abcd}^{\phantom{abcd};bd}=\ell_a\ell_c\T\Phi_{22} , \nonumber \\
 & & C_{abcd\phantom{;bde}e}^{\phantom{abcd};bde}=\ell_a\ell_c(\T^2+2\Lambda_0\T)\Phi_{22} .
\eeqn

Combining the results obtained above, we have shown that, in the spacetime~\eqref{rec_metric_spec} with \eqref{rec_Phi2}, any rank-2 tensor constructed from $\bF$, $\BR$ and their covariant derivatives of arbitrary order takes the form
 \be
  E_{ab}=\lambda_0 g_{ab}+2\ell_a\ell_b\bigg(\sum_{k=0}^{N_1}a_k\T^k\Phi_{22}+\sum_{j=0}^{N_2}c_j\eth'^j\Phi_2\eth^j\bar\Phi_2\bigg) , 
	\label{T_null_main}
 \ee
where $N_1,N_2\in \mathbb{N}$, $\lambda_0$, $a_k$ and $c_j$ are constants, and $\Phi_{22}=\frac{1}{2}\T H^{(0)}$. This is clearly of the required form~\eqref{univ_E} (recall~\eqref{T_null_gen}), as we wanted to prove.

However, the coefficient $b_1$, given by
\be
  b_1=(\Phi_2\bar\Phi_2)^{-1}\bigg(\sum_{k=0}^{N_1}a_k\T^k\Phi_{22}+\sum_{j=0}^{N_2}c_j\eth'^j\Phi_2\eth^j\bar\Phi_2\bigg) ,
	\label{b1_null}
\ee
will in general be a spacetime function. This means that~\eqref{Einst} will {\em not}, as it stands, be compatible with~\eqref{Einst_gen} -- namely, one should add $2\Phi_2\bar\Phi_2(\kappa b_1-\kappa_0)\ell_a\ell_b$ to the RHS of~\eqref{Einst} (which is precisely the additional pure radiation term already mentioned in sections~\ref{subsec_backgr} and \ref{subsec_4D_null_gen}). This affects only one component of the latter, which thus reads
\be
 \T H^{(0)}=2\kappa b_1\Phi_2\bar\Phi_2 ,
 \label{Laplac_null}
\ee
and can be solved (at least in principle) to determine $H^{(0)}$ and thus to fully characterize the spacetime metric~\eqref{rec_metric_spec}.  
We note that while the result~\eqref{T_null_main}--\eqref{Laplac_null} is theory-independent, once a particular theory has been specified, the corresponding Einstein equation~\eqref{Einst_gen} will dictate the specific form of the tensor $E_{ab}$ and thus the precise values of the constants appearing in~\eqref{T_null_main}--\eqref{Laplac_null}. The obtained solution of~\eqref{Laplac_null} will thus be specific to the considered theory. This should be contrasted with the class of (strongly) universal solutions with $\Lambda_0=0$ obtained in \cite{KucOrt19}, which can be fully specified as solutions of the Einstein-Maxwell theory (i.e., also fixing the function $H^{(0)}$, with no need to include additional pure radiation) and then automatically solve {\em also} higher-order theories (in other words, there one has $E_{ab}=T_{ab}$ identically for any of the possible theories specified in \cite{KucOrt19}).

We finally observe that the solutions~\eqref{rec_metric_spec}, \eqref{rec_Phi2} with~\eqref{Laplac_null} represent in the Einstein-Maxwell theory electromagnetic waves accompanied by aligned gravitational waves and pure radiation in the (anti)-Nariai universe \cite{PodOrt03,Khlebnikov86,Ortaggio02}. On the other hand, they are also solutions of any modified field equations~\eqref{Einst_gen}, \eqref{Maxw_gen}, in which case the pure radiation term is absent. The spacetime is generically of Petrov type~II and becomes of type~D iff $(PH^{(0)})_{,\zeta\zeta}=0$.\footnote{A result of \cite{VandenBergh89} states that Einstein-Maxwell solutions of type D with a null Maxwell field do not exist in the Kundt class, when the only contribution to the energy-momentum tensor comes from the Maxwell field. However, this does not apply here precisely because we have allowed also for pure radiation.} For these solutions, no non-zero invariants can be constructed using $\cF$, while all  curvature invariants are constant (in other words, $\cF$ is VSI and the metric is CSI).

\section{Locally homogeneous spacetimes}

\label{sec_homog}

By definition, a homogeneous spacetime admits a transitive group of motions. It follows from the discussion in section~12.1 of \cite{Stephanibook} (see also, e.g., \cite{petrov}) that in the case of a multiply-transitive group of motions, either: (i) the spacetime is Kundt (after excluding certain metrics not compatible with a Maxwell field); or (ii) there exists a simply-transitive subgroup $G_4$. In case~(i), imposing the Einstein equation~\eqref{Einst} implies that the spacetime is degenerate Kundt (see also footnote~\ref{footn_deg}), and one is thus reduced to the analysis of sections~\ref{sec_4D_degK_nonnull} and \ref{sec_4D_null}. We can therefore restrict ourselves here to the case when the group is (or contains) a simply-transitive $G_4$. Using a complex null tetrad of invariant vectors, the spin coefficients and the curvature components are {\em constant} (cf., e.g., \cite{Ozsvath65a,Ozsvath65b,RyaShebook,Stephanibook}). Therefore, a Lorentz transformation with constant parameters enables one to align the (invariant) frame to the energy-momentum tensor of the Maxwell field, and thus to $\cF$ itself (recall $\Phi_{ij}=\kappa_0\Phi_{i}\bar\Phi_{j}$).

\subsection{Non-null fields}

Using the frame described above, the Maxwell field is given by \eqref{F1_0} and its energy-momentum tensor by~\eqref{T_nonnull}, where $\Phi_1$ is a constant by~\eqref{complex_inv} (this means that $\cF$ shares the symmetries of the metric, i.e., it is {\em inheriting}). From Maxwell's equation \cite{Stephanibook} one readily obtains
\be
 \rho=\pi=\tau=\mu=0 ,
 \label{Maxw_hom}
\ee
using which one can compute\footnote{This is the same tensor computed for the degenerate Kundt case in~\eqref{dFdbF} and, {\em mutatis mutandis}, comments similar to those given in footnote~\ref{footn_Hornd} apply also here.}
\be
	\nabla_d\mathcal{F}_{(a|c}\nabla^c\mathcal{\bar F}_{|b)}^{\ d}=16|\Phi_1|^2\left(|\lambda|^2\ell_a \ell_b+|\sigma|^2 n_an_b+\bar\kappa\nu m_am_b+\kappa\bar\nu\bar m_a\bar m_b\right) .
	\label{dFdbF_hom}
\ee

The above quantity must vanish in order to fulfill the assumption~\eqref{univ_E}, thus giving
\be
 \lambda=\sigma=\bar\kappa\nu=0 .
\ee 
Therefore either $\kappa=0$ or $\nu=0$, meaning (with \eqref{Maxw_hom}) that either $\bl$ or $\bn$ is a Kundt vector field.\footnote{Further analysis reveals that in fact both $\kappa=0$ {\em and} $\nu=0$ (cf. \cite{Ozsvath65b} for related computations) -- this is however not relevant to the present discussion.} This case is thus already contained in section~\ref{sec_4D_degK_nonnull}.

In passing, let us note that Ozsv\'ath \cite{Ozsvath65b} obtained an Einstein-Maxwell solution with a homogeneous metric (of Petrov type~I) admitting a simply-transitive $G_4$ and an inheriting, non-null Maxwell field (recently shown to be the unique such solution \cite{AndTor20}). One can verify that it is {\em not} a universal solution, precisely because the above tensor~\eqref{dFdbF_hom} is non-zero.

\subsection{Null fields}

Using the invariant frame mentioned above, the Maxwell field is given by~\eqref{F_null}. The Maxwell equation (or the Mariot-Robinson theorem \cite{Stephanibook}) then gives
\be
 \kappa=0=\sigma ,
\ee
i.e., $\bl$ is geodesic and shearfree. Thanks to $\Phi_{00}=\Phi_{01}=\Phi_{02}=0$, the Goldberg-Sachs theorem \cite{GolSac62} (cf. also theorem~7.1 of \cite{Stephanibook}) further gives
\be
 \Psi_0=0=\Psi_1 ,
\ee
while the Sachs equation (cf. the NP equation~(7.21a) of \cite{Stephanibook}) reduces to
\be
 0=\rho^2+(\epsilon+\bar\epsilon)\rho .
 \label{Sachs_homog}
\ee

With the previous results and $\Phi_{11}=\Phi_{12}=0$, the Bianchi equation~(7.32k) of \cite{Stephanibook} takes the form
\be
 0=(\rho+\bar\rho-2\epsilon-2\bar\epsilon)\Phi_{22} ,
\ee
which thus requires $\rho+\bar\rho-2\epsilon-2\bar\epsilon=0$. However, compatibility of the latter with~\eqref{Sachs_homog} gives
\be
 \rho =0 ,
\ee
so that $\bl$ is Kundt. This case therefore belongs to the discussion of section~\ref{sec_4D_null}, as far as spacetimes of Petrov type II and D are concerned. Let us only mention here that certain homogeneous Kundt (plane wave) metrics of type N and O in the presence of a null Maxwell fields and $\Lambda_0=0$ (cf. theorem~12.1 of \cite{Stephanibook}) are known to describe universal solutions \cite{KucOrt19}.

\section{Example~I: nonlinear electrodynamics (NLE)}

\label{sec_NLE}

\label{subsec_NLE_gen}

Nonlinear modifications of Maxwell's theory were originally proposed in order to cure the divergent electron's self-energy, most famously by Born and Infeld \cite{Born33,BorInf34}. An overview of more general NLE can be found, e.g., in \cite{Plebanski70}. 
For simplicity, here we mostly restrict ourselves to NLE minimally coupled to Einstein's gravity (but some comments on Einstein-Weyl gravity are also given in section~\ref{subsubsec_EW}). The theory is thus given by
\be
 S=\int\d^4x\sqrt{-g}\left[\frac{1}{\kappa}(R-2\Lambda)+L(I,J)\right] ,
 \label{NLE}
 \ee
where $L$ is a (in principle arbitrary) function of the two algebraic invariants~\eqref{IJ}. In~\eqref{Einst_gen} and \eqref{Maxw_gen} one thus has (see, e.g., \cite{Peres61,Plebanski70})
\beqn
 & & E_{ab}=-2L_{,I}T_{ab}+\frac{1}{2}g_{ab}(L-IL_{,I}-JL_{,J}) , \label{E_NLE} \\
 & & H^{ab}=L_{,I}F^{ab}+L_{,J}{}^{*}\!F^{ab} , \label{M_NLE}
\eeqn
with $T_{ab}$ as in~\eqref{T}.

On a solution of the Maxwell equation, by~\eqref{Maxw} and $\bF=\d\bA$ one has $\nabla_b F^{ab}=0=\nabla_b {}^{*}\!F^{ab}$. Recalling also that for the fields considered in this paper the invariants $I$ and $J$ are constant (cf.~\eqref{complex_inv}), it is obvious that~\eqref{Maxw_gen} is satisfied identically, while the tensor~\eqref{E_NLE} is precisely of the required form~\eqref{univ_E}, with both $b_1$ and $b_2$ (which can be read off from \eqref{E_NLE}) being constants.\footnote{We emphasize this is a very special situation for which $b_1=$const also in the null case (i.e., eq.~\eqref{E_NLE} with \eqref{T_null_gen} means that in~\eqref{T_null_main} one has $a_k=0$ for $k\ge0$ and $c_j=0$ for $j\ge1$) so that $(\bg,\bF)$ can be taken to be an Einstein-Maxwell solution also there -- {\em without} the additional pure radiation that needs to be included in the general discussion of sections~\ref{subsec_backgr} and \ref{subsubsec_rec_spec}.} In other words, the field equations~\eqref{Einst_gen}, \eqref{Maxw_gen} of any theory~\eqref{NLE} are satisfied identically by the pairs $(\bg,\bF)$ identified in sections~\ref{sec_4D_degK_nonnull} and \ref{sec_4D_null} (namely \eqref{4D_nonnull_metric}, \eqref{4D_nonnull_F} and \eqref{rec_metric_spec}, \eqref{rec_Phi2} {\em with \eqref{H0_null}}), provided the algebraic constraints~\eqref{algebr_constr} admit a real solution. Violations of the latter occur, e.g., in special theories such that, for a given solution $(\bg,\bF)$, one of the quantities $L$, $L_{,I}$ or $L_{,J}$ becomes singular, thus giving rise to ill-defined terms in~\eqref{E_NLE} or in~\eqref{M_NLE} -- see section~\ref{subsec_modmax} below for an example. Another exception arises when (on a given solution) $L_{,I}=0$, so that $b_1=0$ in~\eqref{univ_E} (as happens, e.g., for stealth fields \cite{Smolic18}).
In the following sections~\ref{subsec_BI} and \ref{subsec_modmax} the general results just described will be exemplified in the case of two specific theories of NLE that are of particular interest.

It should be observed that, in the case of null fields, it was already known to Schr{\"o}edinger \cite{Schroedinger35,Schroedinger43} that any~\eqref{M_NLE} solves \eqref{Maxw_gen}, while the validity of condition~\eqref{univ_E}  was subsequently pointed out in \cite{Kichenassamy59,KreKic60,Peres61}.\footnote{The articles \cite{Schroedinger35,Schroedinger43,Kichenassamy59,KreKic60,Peres61} focused on a subset of nonlinear theories of electrodynamics such that the ``pathological'' cases mentioned above do not occur (typically by requiring that in the limit of small $I$ and $J$ the linear Maxwell theory is recovered, i.e., $L\approx-I/2$).} Our results extends also to non-null fields and to theories of gravity other than Einstein's (although in this section we have exemplified only the latter).

We further note that the simple structure of the field equations of NLE enables one to easily identify other Einstein-Maxwell solutions that also solve NLE (but are not universal), in addition to those of sections~\ref{sec_4D_degK_nonnull} and \ref{sec_4D_null}. An example with a non-null field and a {\em non-Kundt} metric is given by the homogenous solution obtained by Ozsv\'ath mentioned in section~\ref{sec_homog} \cite{Ozsvath65b,AndTor20}. Examples with a null field are contained, e.g., in the Robinson-Trautman family \cite{Stephanibook}.

\subsection{Born-Infeld theory}

\label{subsec_BI}

The celebrated NLE of Born and Infeld~\cite{BorInf34} is given by
\be
 L(I,J)=2b^2\left(1-L_0\right) , \qquad L_0=\sqrt{1+\frac{I}{2b^2}-\frac{J^2}{16b^4}} ,
\ee
where $b$ is a constant parameter with the dimension of an inverse length (such that Maxwell's theory is recovered for $b\to\infty$). Then~\eqref{E_NLE} takes the form~\eqref{univ_E} with
\be
 b_1=\frac{1}{L_0} , \qquad b_2=b^2\left(1-\frac{1}{L_0}-\frac{I}{4b^2L_0}\right) .
\ee 

\subsubsection{Non-null fields}

For the Einstein-Maxwell solution~\eqref{4D_nonnull_metric}, \eqref{4D_nonnull_F} one has (cf.~\eqref{complex_inv}, \eqref{F1_0}) $I=-4(\Phi_1^2+\bar\Phi_1^2)$ and $J=4i(\Phi_1^2-\bar\Phi_1^2)$, where $\Phi_1$ is a complex constant. Reparametrizing the latter as 
\be
 \Phi_1=\frac{1}{\sqrt{2}}\rho_0e^{i\theta_0/2} ,
 \label{reparam}
\ee
constraints~\eqref{algebr_constr} become
\be
 \Lambda-\Lambda_0=\kappa_0b^2\left[\sqrt{\left(1-\frac{\rho_0^2}{b^2}\cos\theta_0\right)^2-\frac{\rho_0^4}{b^4}}-1+\frac{\rho_0^2}{b^2}\cos\theta_0\right] , \qquad \kappa=\kappa_0\sqrt{\left(1-\frac{\rho_0^2}{b^2}\cos\theta_0\right)^2-\frac{\rho_0^4}{b^4}} , 
\ee
where the parameter $\theta_0$ reflects the consequences of a duality rotation. Type~D and O solutions of this type (i.e., \eqref{4D_nonnull_metric} with $h=0$) in the Born-Infeld electrodynamics were already obtained in \cite{Morales82} (see also \cite{PleMor86} for more general NLE).\footnote{According to~\cite{Morales82}, earlier unpublished results were obtained by Jan Slav\'{\i}k: 
``It should be noticed that Slav\'{\i}k, looking for some solutions in nonlinear electrodynamics (NLE), mentions already that, in the conformally flat subcase, it is possible to assign the BR metric the role of a carrier of a solution to nonlinear dynamics (J. Slav\'{\i}k: Doctoral Thesis, 1976, Institute of Theoretical Physics of the University of Warszawa, Poland)''.}

\subsubsection{Null fields}

When $I=0=J$ one has (on-shell) $L_0=1$, so that $b_1=1$, $b_2=0$ and~\eqref{algebr_constr} reduces to
\be
 \Lambda=\Lambda_0 , \qquad \kappa=\kappa_0 ,
\ee
i.e. any Einstein-Maxwell solution with a null field (and in particular the one given by \eqref{rec_metric_spec}, \eqref{rec_Phi2}, \eqref{H0_null}) also solves Einstein's gravity coupled to the electrodynamics of Born and Infeld, with no need to redefine $\Lambda_0$ and $\kappa_0$. This fact was already known \cite{Kichenassamy59,KreKic60,Peres61}.

\subsection{ModMax theory}

\label{subsec_modmax}

The recently proposed \cite{Bandosetal20} ModMax electrodynamics is of particular interest in that it preservers both $SO(2)$ duality and conformal invariance  (see also \cite{Kosyakov20}). It is described by 
\be
  L(I,J)=-\frac{1}{2}I\cosh\gamma+\frac{1}{2}\sqrt{I^2+J^2}\sinh\gamma ,
	\label{ModMax}
\ee
where $\gamma$ is a dimensionless parameter\footnote{Not to be confused with the NP/GHP coefficient used in section~\ref{sec_4D_null} and appendix~\ref{subsec_GHP}.} (see \cite{Bandosetal20} for physical reasons to restrict to $\gamma\ge0$, with $\gamma=0$ corresponding to Maxwell's theory).
Here~\eqref{E_NLE} takes the form~\eqref{univ_E} with
\be
  b_1=\cosh\gamma-\frac{I}{\sqrt{I^2+J^2}}\sinh\gamma , \qquad b_2=0 .
\ee 

Since here neither $E_{ab}$ nor $H^{ab}$ in eqs.~\eqref{E_NLE}, \eqref{M_NLE} are well-defined for null fields (i.e., for $I=0=J$) \cite{Bandosetal20},\footnote{The field equations for the electromagnetic field, however, remain well-behaved in the Hamiltonian formalism \cite{Bandosetal20}. Thanks to Dmitri Sorokin for useful comments on this point.} we will consider only the non-null case, i.e., the solution~\eqref{4D_nonnull_metric}, \eqref{4D_nonnull_F}. Then~\eqref{algebr_constr} gives
\be
 \Lambda=\Lambda_0 , \qquad \kappa=\frac{\kappa_0}{\cosh\gamma+\sinh\gamma\cos\theta_0} .
\ee
where $\theta_0$ is defined as in~\eqref{reparam}. Here one should exclude special fine-tuned configurations with $\cosh\gamma+\sinh\gamma\cos\theta_0$, which correspond to $b_1=0$ and thus $E_{ab}=0$, i.e., to stealth configurations of the theory~\eqref{ModMax}, for which the spacetime metric is Einstein. 

Solutions of this type (in the case with $k_2=0$ in~\eqref{4D_nonnull_metric}) were considered in \cite{Flores-Alfonsoetal21}.

\subsubsection{Extension to Einstein-Weyl gravity}

\label{subsubsec_EW}

The results given above refer to Einstein's gravity coupled to ModMax electrodynamics (i.e., \eqref{NLE} with \eqref{ModMax}). Given the conformal invariance of the latter, it may be now instructive to consider an example where also gravity is modified by the addition of a conformal invariant term. A well-known theory with such a property is given by Einstein-Weyl gravity, corresponding to the action
\be
 S=\int\d^4x\sqrt{-g}\left[\frac{1}{\kappa}(R-2\Lambda)-\alpha_0C_{abcd}C^{abcd}+L(I,J)\right] ,
 \label{NLE_EW}
 \ee
with \eqref{ModMax}, where $\alpha_0$ is a coupling constant with the dimension of a length squared (Weyl conformal gravity is obtained for $\kappa^{-1}=0$). Here \eqref{M_NLE} is unchanged, while \eqref{E_NLE} becomes \cite{Bach21,Buchdahl53}
\be
 E_{ab}=4\alpha_0B_{ab}-2L_{,I}T_{ab}+\frac{1}{2}g_{ab}(L-IL_{,I}-JL_{,J}) , 
 \label{E_NLE_EW}
\ee
where $B_{ab}$ is the (symmetric, traceless, conserved) Bach tensor
\be
	B_{ab}=\big( \nabla^c \nabla^d + \textstyle{\frac{1}{2}} R^{cd} \big) C_{acbd} . 
	\label{Bach}
\ee

As above, it makes sense here to consider only the non-null field solution~\eqref{4D_nonnull_metric}, \eqref{4D_nonnull_F}. For the latter one easily finds (in agreement with the results of section~\ref{sec_4D_degK_nonnull})
\be
 \nabla^c \nabla^dC_{acbd}=0 , \qquad R^{cd}C_{acbd}=\frac{4}{3} \kappa _0 \Lambda _0	T_{ab}  ,
\ee
which with \eqref{univ_E}, \eqref{algebr_constr}, \eqref{E_NLE_EW}, \eqref{Bach} and \eqref{reparam} gives 
\be
 \Lambda=\Lambda_0 , \qquad \kappa=\frac{\kappa_0}{\cosh\gamma+\sinh\gamma\cos\theta_0+\frac{8}{3}\alpha_0 \kappa _0 \Lambda _0} .
\ee
Similarly as in the case of Einstein gravity discussed above, special configurations with $\cosh\gamma+\sinh\gamma\cos\theta_0+\frac{8}{3}\alpha_0 \kappa _0 \Lambda _0=0$ describe stealth fields (i.e., $E_{ab}=0$) and should be considered separately.

\section{Example~II: Horndeski's electrodynamics}

\label{sec_Horn}

Horndeski \cite{Horndeski76} obtained the unique theory (constructed from a Lagrangian depending on the metric, the vector potential and their derivatives) such that: (i) the corresponding field equations are of second order; (ii) in the presence of sources they are compatible with charge conservation; (iii) the equation for the electromagnetic field reduces to Maxwell's equation in flat space. The theory of~\cite{Horndeski76} is given by 
\be
 S=\int\d^4x\sqrt{-g}\left[\frac{1}{\kappa}(R-2\Lambda)-\beta_0 F_{ab}F^{ab}-\gamma_0 F_{ab}F^{cd}\,{}^{*}\!R^{*ab}_{\phantom{*ab}cd}\right] ,
 \label{Horn}
 \ee
where $\beta_0$ and $\gamma_0$ are coupling constants (dimensionless and with the dimension of a length squared, respectively), and the standard Einstein-Maxwell theory is recovered for $\gamma_0=0$. In~\eqref{Einst_gen} and \eqref{Maxw_gen} this gives rise to \cite{Horndeski76,Horndeski76_unp,HorWai77}
\beqn
 & & E_{ab}=2\beta_0 T_{ab}+2\gamma_0\left(F^{ce}F^{d}_{\phantom{d}e}\,{}^{*}\!R^*_{acbd}+\nabla_d{}^{*}\!F_{ac}\,\nabla^c{}^{*}\!F_{b}^{\phantom{b}d}\right) , \label{E_Horn} \\
 & & H^{ab}=\beta_0 F^{ab}+\gamma_0 F_{df}\,{}^{*}\!R^{*abdf} , \label{M_Horn}
\eeqn
with $T_{ab}$ as in~\eqref{T}.

\subsection{Non-null fields}

For the Einstein-Maxwell solution~\eqref{4D_nonnull_metric}, \eqref{4D_nonnull_F}, the term $F_{df}\,{}^{*}\!R^{*abdf}$ in~\eqref{M_Horn} becomes a linear combination with constant coefficients of $F^{ab}$ and ${}^{*}\!F^{ab}$, so that~\eqref{Maxw_gen} is satisfied identically, in agreement with the results of section~\ref{sec_4D_degK_nonnull}. In~\eqref{E_Horn} one has $\nabla_d{}^{*}\!F_{ac}\,\nabla^c{}^{*}\!F_{b}^{\phantom{b}d}=0$ (see section~\ref{sec_4D_degK_nonnull}), while one can compute
\be
 F^{ce}F^{d}_{\phantom{d}e}\,{}^{*}\!R^*_{acbd}=-\left[\Lambda_0+\kappa_0(\Phi_1^2+\bar\Phi_1^2)\right]T_{ab}+\left[\Lambda_0(\Phi_1^2+\bar\Phi_1^2)+4\kappa_0\Phi_1^2\bar\Phi_1^2\right]g_{ab} .
 \label{FFSRS}
\ee

Using~\eqref{E_Horn}, \eqref{FFSRS}, \eqref{univ_E} and again the parametrization~\eqref{reparam}, constraints~\eqref{algebr_constr} become
\be
 \Lambda-\Lambda_0=\frac{\kappa_0\gamma_0\rho_0^2(\Lambda_0\cos\theta_0+\kappa_0\rho_0^2)}{\beta_0-\gamma_0(\Lambda_0+\kappa_0\rho_0^2\cos\theta_0)} , \qquad \kappa=\frac{\kappa_0}{2\beta_0-2\gamma_0(\Lambda_0+\kappa_0\rho_0^2\cos\theta_0)} .
\ee

Here one should exclude special fine-tuned configurations such that $\beta_0-\gamma_0(\Lambda_0+\kappa_0\rho_0^2\cos\theta_0)=0$, which correspond to $b_1=0$ and thus to stealth configurations in the theory~\eqref{Horn}, for which the spacetime metric is Einstein (more precisely, these configurations are ``almost stealth'', since in general $b_2\neq0$).

Solutions of the form~\eqref{4D_nonnull_metric}, \eqref{4D_nonnull_F} were also constructed in \cite{Horndeski78_Birk} in the case $h=0$ with $k_2>0$.

\subsection{Null fields}

\label{subsec_Horn_null}

For the fields $(\bg,\bF)$ given by~\eqref{rec_metric_spec}, \eqref{rec_Phi2}, in~\eqref{M_Horn} one finds $F_{df}\,{}^{*}\!R^{*abdf}=0$, therefore~\eqref{Maxw_gen} is satisfied trivially, in agreement with the results of section~\ref{sec_4D_null}. The terms in~\eqref{E_Horn} take the form (recall~\eqref{der_F_null} and footnotes~\ref{footn_Hornd}, \ref{foot_GHP}, and \ref{foot_GHP_2})
\beqn
 & & F^{ce}F^{d}_{\phantom{d}e}\,{}^{*}\!R^*_{acbd}=-\Lambda_0 T_{ab} , \label{Horn_null_corr1} \\ 
 & & \nabla_d{}^{*}\!F_{ac}\,\nabla^c{}^{*}\!F_{b}^{\phantom{b}d}=2\eth'\Phi_2\eth\bar\Phi_2\ell_a \ell_b , \label{Horn_null_corr2} 
\eeqn

Hence~\eqref{E_Horn}, \eqref{univ_E} give $b_2=0$ and thus, by~\eqref{algebr_constr},
\be
 \Lambda=\Lambda_0 .
\ee
However, $b_1$ as determined from~\eqref{E_Horn} (with~\eqref{T_null_gen}, \eqref{Horn_null_corr1}, \eqref{Horn_null_corr2}) is not a constant, so that the field equation~\eqref{Laplac_null} (i.e., the only remaining component of~\eqref{Einst_gen} to be solved) reduces to the following partial differential equation
\be
 \T H^{(0)}=4\kappa\left[(\beta_0-\gamma_0\Lambda_0)\Phi_2\bar\Phi_2+\gamma_0\eth'\Phi_2\eth\bar\Phi_2\right] .
 \label{H0_Horn_null}
\ee
This can be solved (at least in principle) to determine the metric function $H^{(0)}$ of~\eqref{rec_metric_spec}. We emphasize once again that here $H^{(0)}$ is {\em not} of the form~\eqref{H0_null} as in the electrovac Einstein-Maxwell theory -- i.e., a solution of~\eqref{H0_Horn_null} corresponds on the Einstein-Maxwell side to a null electromagnetic field accompanied by aligned pure radiation, as discussed in more generality in sections~\ref{subsec_backgr} and \ref{subsubsec_rec_spec}. In the limit $\Lambda_0=0$ one recovers the \pp waves obtained in \cite{GurHal78,Horndeski79}.

\section*{Acknowledgments}

I am grateful Sigbj\o rn Hervik for useful comments. This work has been supported by research plan RVO: 67985840 and research grant GA\v CR 19-09659S.

\renewcommand{\thesection}{\Alph{section}}
\setcounter{section}{0}

\renewcommand{\theequation}{{\thesection}\arabic{equation}}

\section{Basics of the GHP formalism}
\setcounter{equation}{0}

\label{subsec_GHP}

In this appendix we summarize the basic equations of the GHP formalism needed for the purposes of the present paper. Some familiarity with the formalism will, however, be assumed -- see \cite{GHP,penrosebook1} for more details.

\subsection{Preliminaries}

Let us introduce a normalized spinor dyad $(o^A,\iota^A)$, such that 
\be
 o_A\iota^A=1 ,
 \label{spinors_norm}
\ee
out of which one can define a standard null tetrad \cite{GHP,penrosebook1}
\be
  \ell^a=o^A\bar o^{A'} , \quad n^a=\iota^A\bar \iota^{A'} , \quad m^a=o^A\bar\iota^{A'} , \quad \bar m^a=\iota^A\bar o^{A'} ,
	\label{basis_spinors}
\ee
as used in the main text of the present paper.\footnote{We warn the reader that in the appendices we use for spinors the more traditional signature convention of \cite{GHP,penrosebook1}, i.e., the basis~\eqref{basis_spinors} with \eqref{spinors_norm} corresponds to {\em minus} the metric~\eqref{g_tetrad} used in the main body of the paper. Furthermore, to make the contact with the NP formalism easier, in the GHP notation we make the standard substitutions \cite{GHP,penrosebook1,Stephanibook} $\tau'=-\pi$, $\sigma'=-\lambda$, $\rho'=-\mu$, $\kappa'=-\nu$, $\gamma'=-\epsilon$, $\beta'=-\alpha$.} In this basis one has
\be
 \varepsilon_{AB}=o_A\iota_B-\iota_Ao_B .
\ee

A boost-spin transformation of the dyad is defined by
\be
 o^A\mapsto \chi o^A , \qquad \iota^A\mapsto \chi^{-1}\iota^A ,
 \label{dyad_transf}
\ee
where $\chi$ is a complex scalar field.

A quantity (a scalar, a spinor or a tensor) $\eta$ is called a {\em weighted quantity} of type $\{p,q\}$ if under~\eqref{dyad_transf} it transforms as 
\[
	\eta\mapsto \chi^p \bar{\chi}^q \eta .
\]
In particular, $o^A$ and $\iota^A$ can be regarded  themselves as spinors of type $\{1,0\}$ and $\{-1,0\}$, respectively. Recall also that complex conjugation interchanges the values of $p$ and $q$.

Equivalently, one can also say that $\eta$ possesses boost and spin weights $(b,s)$ given by
\be
	b=\frac{1}{2}(p+q) , \qquad s=\frac{1}{2}(p-q) , 
	\label{bs}
\ee
which will be useful in the following. Under complex conjugation $b$ is thus invariant whereas $s$ changes sign.

The GHP derivative operators are defined by their action on a quantity of type $\{p,q\}$, namely
\beqn
	& & \thorn\eta=(D-p\epsilon-q\bar\epsilon)\eta  , \qquad \thorn'\eta=(D'-p\gamma-q\bar\gamma)\eta , \\
	& & \eth\eta=(\delta - p\beta -q\bar\alpha)\eta , \qquad \eth'\eta=(\delta'-p\alpha-q\bar\beta)\eta ,
\eeqn
where $(D,D'=\bigtriangleup,\delta,\delta'=\bar\delta)$ are the standard directional derivatives of the Newman-Penrose formalism \cite{NP,GHP,Stephanibook,penrosebook1}.

For later use it is useful to display how the derivative operators act on the basis spinors, i.e., 
\beqn
 & & \thorn o^A=-\kappa\iota^A , \quad \thorn \bar o^{A'}=-\bar\kappa\bar\iota^{A'} , \quad \thorn \iota^A=\pi o^A , \quad \thorn \bar \iota^{A'}=\bar\pi \bar o^{A'} , \label{der_thorn} \\
 & & \thorn' o^A=-\tau\iota^A , \quad \thorn' \bar o^{A'}=-\bar\tau\bar\iota^{A'} , \quad \thorn' \iota^A=\nu o^A , \quad \thorn' \bar \iota^{A'}=\bar\nu\bar o^{A'} , \\
 & & \eth o^A=-\sigma\iota^A , \quad \eth\bar o^{A'}=-\bar\rho\bar\iota^{A'} , \quad \eth \iota^A=\mu o^A , \quad \eth\bar \iota^{A'}=\bar\lambda\bar o^{A'} , \\
 & & \eth' o^A=-\rho\iota^A , \quad \eth'\bar o^{A'}=-\bar\sigma\bar\iota^{A'} , \quad \eth' \iota^A=\lambda o^A , \quad \eth'\bar \iota^{A'}=\bar\mu\bar o^{A'} , \label{der_eth'}
\eeqn
the RHS of each of the above equations defining one of the weighted spin coefficients.

The types of the weighted spin coefficients, of the Maxwell, traceless Ricci and Weyl spinor components and of the derivative operators are given by 
\beqn
& & \kappa : \{3, 1\}, \qquad \sigma : \{3, -1\}, \qquad \rho : \{1, 1\} ,  \qquad \tau : \{1, -1\},\nonumber \\
& & \nu : \{-3, -1\},  \qquad \lambda : \{-3, 1\}, \qquad \mu : \{-1, -1\},  \qquad \pi : \{-1, 1\}, \nonumber \\
& & \Phi_0 : \{2, 0\},  \qquad \Phi_1 : \{0, 0\},  \qquad \Phi_2 : \{-2, 0\}, \nonumber \\
& & \Phi_{00}=\bar\Phi_{00} : \{2, 2\},  \qquad \Phi_{01}=\bar\Phi_{10} : \{2, 0\},  \qquad \Phi_{02}=\bar\Phi_{20} : \{2, -2\}, \label{GHP_weights} \\
& & \Phi_{11}=\bar\Phi_{11} : \{0, 0\},  \qquad \Phi_{12}=\bar\Phi_{21} : \{0, -2\},  \qquad \Phi_{22}=\bar\Phi_{22} : \{-2, -2\}, \nonumber \\
& & \Psi_0 : \{4, 0\},  \qquad \Psi_1 : \{2, 0\},  \qquad \Psi_2 : \{0, 0\}, \qquad  \Psi_3: \{-2,0\},  \qquad \Psi_4 : \{-4, 0\} ,\nonumber \\
& & \thorn : \{1,1\} , \qquad  \thorn' : \{-1,-1\},   \qquad \eth : \{1, -1\},  \qquad \eth' : \{-1,1\} . \nonumber
\eeqn

\subsection{Type~II Kundt spacetimes with an aligned electromagnetic field}

From now on we assume that the spacetime is Kundt of Riemann type II, the electromagnetic field is aligned, and the basis dyad is parallelly transported, i.e.,
\be
  \Psi_0=\Psi_1=0=\Phi_{00}=\Phi_{01} , \qquad \Phi_0=0 , \qquad\kappa=\rho=\sigma=\epsilon=\pi=0 .
	\label{degK_app}
\ee

The commutators thus take the simplified form
\beqn
& & \left[ \thorn,\thorn' \right]=\bar\tau\eth +\tau\eth'-p(\Psi_2+\Phi_{11}-R/24)-q(\bar\Psi_2+\Phi_{11}-R/24) , \label{comm_thth'} \\
& & \left[ \eth,\eth' \right]=(\mu-\bar\mu)\thorn+p(\Psi_2-\Phi_{11}-R/24)-q(\bar\Psi_2-\Phi_{11}-R/24) , \label{comm_etheth'} \\ 
& & \left[ \thorn,\eth \right] =0 , \label{comm_theth} \\  
& & \left[ \thorn',\eth \right] =-\mu\eth-\bar\lambda\eth'-\tau\thorn'+\bar\nu\thorn-p(\bar\tau\bar\lambda+\bar\Psi_3)-q(\mu\tau+\Phi_{12}) , \label{comm_th'eth}  
\eeqn
along with the complex conjugates of the last two equations.

If one also assumes that the energy-momentum tensor originates from the electromagnetic field, i.e., $\Phi_{\alpha\beta}=\kappa_0\Phi_\alpha\bar\Phi_\beta$ (with $\alpha,\beta=0,1,2$), then one also has
\be
 \Phi_{02}=0 , \qquad R=4\Lambda_0 .
\label{degK_app_aligned}
\ee

\section{Proof for non-null fields (section~\ref{sec_4D_degK_nonnull})}
\setcounter{equation}{0}

\label{app_nonnull}

In addition to \eqref{degK_app} and \eqref{degK_app_aligned}, for the solutions of section~\ref{sec_4D_degK_nonnull} we have
\be
 \tau=\mu=\lambda=0 , \qquad \Psi_2=-\frac{\Lambda_0}{3} , \qquad \Psi_3=0 , \qquad \Phi_1=\mbox{const} \qquad \Phi_2=0 , \label{nonnull_cond1} 
\ee 
so that also $\Phi_{12}=\Phi_{22}=0$, and the spin-coefficient equations and Bianchi identities needed in the following read\footnote{The Maxwell equation is satisfied identically here as a consequence of~\eqref{nonnull_cond1} and \eqref{degK_app}.}
\beqn
 & & \thorn\nu=0 , \qquad \eth\nu=0 , \qquad \eth'\nu=\Psi_4 , \label{nonnull_cond2} \\
 & & \thorn\Psi_4 =0 , \qquad \eth\Psi_4=-\nu(3\Psi_2-2\Phi_{11}) . \label{nonnull_cond3}
\eeqn
The only non-zero derivatives of basis spinors \eqref{der_thorn}--\eqref{der_eth'} are given by
\be
 \thorn' \iota^A=\nu o^A , \qquad \thorn' \bar \iota^{A'}=\bar\nu\bar o^{A'} .
 \label{nonnull_spin_der}
\ee

\subsection{1-balanced $s$-balanced spinors}

Let us consider a spinor (or tensor)\footnote{Recall that any tensor field may be interpreted as a spinor field \cite{GHP,penrosebook1}. This will be understood in what follows.} field $\bS$. By a refinement of the notion of  1-balanced tensors introduced in \cite{HerPraPra14} (see also \cite{Pravdaetal02,Coleyetal04vsi,Herviketal15,OrtPra16,HerPraPra17,HerOrtPra18}) it is useful to give the following
\begin{definition}[1-balanced $s$-balanced spinors]
\label{def_1bal_sbal}
A scalar $\eta$ of boost and spin weights ($b,s$) is a ``1-balanced $s$-balanced scalar'' if it satisfies the following two conditions
	\beqn
		& & \eta=0 \quad \mbox{for } b\ge-1 , \qquad \thorn^{-b-1}\eta=0 \quad \mbox{for } b<-1 , \label{1balanced} \\
		& & \thorn^{-b-2}\eth^{-s}\eta=0 \quad \mbox{for } s<0 , \qquad \thorn^{-b-2}\eta=0 \quad \mbox{for } s\ge0 . \label{s_balanced}
	\eeqn
A spinor $\bS$ whose components are all 1-balanced $s$-balanced scalars is a ``1-balanced $s$-balanced spinor''.
\end{definition}

Condition~\eqref{1balanced} defines 1-balanced scalars \cite{HerPraPra14}, therefore a 1-balanced $s$-balanced spinor is, in particular, a 1-balanced spinor. We will consider only 1-balanced $s$-balanced spinors with an integer $b$, which means that $b\le-2$ for all non-zero components of $\bS$. Furthermore, by~\eqref{s_balanced}, components with $b=-2$ can only have $s<0$ (with $\eth^{-s}\eta=0$).

We will need the following
\begin{theorem}[Derivatives of 1-balanced $s$-balanced spinors]
 \label{th_der_1bal_sbal}
 In the spacetime~\eqref{4D_nonnull_metric} the covariant derivative of a 1-balanced $s$-balanced spinor $\bS$ is again a 1-balanced $s$-balanced spinor.
 \end{theorem}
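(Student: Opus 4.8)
The plan is to verify the two defining properties of Definition~\ref{def_1bal_sbal} directly for $\nabla\bS$, working component-by-component in the parallelly-transported GHP frame. Since $\bS$ is a spinor, it suffices to treat a single weighted scalar component $\eta$ of type $\{p,q\}$ (boost and spin weights $(b,s)$) and track how the four GHP operators $\thorn,\thorn',\eth,\eth'$ act on it, together with the extra terms coming from differentiating the basis spinors via~\eqref{nonnull_spin_der}. The key simplification, which I would invoke at the outset, is that in the spacetime~\eqref{4D_nonnull_metric} the \emph{only} non-vanishing spin coefficient is $\nu$ (of type $\{-3,-1\}$, i.e.\ weights $(-2,-1)$), and the only relevant curvature/Maxwell data are $\Phi_1=\text{const}$, $\Psi_2=-\Lambda_0/3=\text{const}$, $\Phi_{11}=\text{const}$ and $\Psi_4$ (with $\thorn\Psi_4=0=\thorn\nu$, $\eth\nu=0$). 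So the covariant derivative $\nabla\bS$, decomposed into GHP pieces, produces: (a) the four operators $\thorn\eta,\thorn'\eta,\eth\eta,\eth'\eta$, each carrying the weight shift of the corresponding operator (namely $(+1,+1),(-1,-1),(+1,-1),(-1,+1)$), and (b) algebraic terms of the form $\nu\,\eta$ or $\bar\nu\,\eta$ arising from $\thorn'$ hitting an $\iota$ or $\bar\iota$ index, which shift the weights by $(-2,\mp1)$ but cost no derivative.

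First I would handle the 1-balanced condition~\eqref{1balanced}. Because $\bS$ is 1-balanced, every component has $b\le-2$, and the operator that most increases $b$ is $\thorn$, which raises it to $b+1\le-1$; thus $\thorn\eta$ is still allowed. For the quantitative part, I would show that for a component $\eta'$ of $\nabla\bS$ with boost weight $b'<-1$ one has $\thorn^{-b'-1}\eta'=0$. The cleanest route is to use the commutator $[\thorn,\eth]=0$ from~\eqref{comm_theth} (and note $\thorn\nu=0$, $\thorn\Psi_k=0$), so that $\thorn$ commutes through $\eth$ and through multiplication by the constant/$\thorn$-annihilated background quantities; then $\thorn^{-b'-1}$ applied to $\thorn\eta$, $\eth\eta$, $\eth'\eta$ or $\nu\eta$ reduces to a power of $\thorn$ hitting $\eta$ itself with an exponent at least $-b-1$, which vanishes by hypothesis. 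The operators $\thorn'$ and the $\nu$-terms only \emph{lower} $b$, making this even easier; the one point needing care is $[\thorn,\thorn']$ from~\eqref{comm_thth'}, but with $\tau=0$ it reduces to multiplication by constants, so $\thorn$ still effectively commutes past $\thorn'$ up to harmless constant factors.

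Next, the $s$-balanced condition~\eqref{s_balanced}: for a component $\eta'$ of $\nabla\bS$ with spin weight $s'<0$, I must show $\thorn^{-b'-2}\eth^{-s'}\eta'=0$, and for $s'\ge0$ that $\thorn^{-b'-2}\eta'=0$. Again using $[\thorn,\eth]=0$ and $\eth\nu=0$, the combined operator $\thorn^{a}\eth^{c}$ commutes freely (up to constants) past the background quantities and past $\eth,\eth'$; here I also need $[\eth,\eth']$ from~\eqref{comm_etheth'}, which with $\mu=0$ reduces to multiplication by constants, so $\eth$ and $\eth'$ commute too. One then checks case by case how $(b,s)\mapsto(b',s')$ under each of the pieces $\thorn,\thorn',\eth,\eth',\nu\cdot,\bar\nu\cdot$ and verifies that the required number of $\thorn$'s and $\eth$'s to annihilate $\eta'$ never exceeds what is already guaranteed to annihilate $\eta$ — the critical cases being $\eth\eta$ (which lowers $s$ by one, potentially moving from $s=0$ to $s=-1$, but simultaneously one has one extra $\eth$ available) and the $\nu$-terms (which keep a derivative in reserve, since $\thorn\nu=0=\eth\nu$ means $\nu$ passes through $\thorn^a\eth^c$ for free). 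I expect the main obstacle to be purely bookkeeping: organizing the finitely many ways a GHP piece can change $(b,s)$ and matching the exponents in~\eqref{1balanced}--\eqref{s_balanced}, while being careful that the commutators~\eqref{comm_thth'}--\eqref{comm_etheth'} contribute only \emph{constant} multiples (so no derivatives are spent) and that the ``boundary'' component $b'=-2$ of $\nabla\bS$ indeed always carries $s'<0$ — the latter following because $b=-2$ components of $\bS$ have $s<0$ with $\eth^{-s}\eta=0$, and the only way to reach $b'=-2$ is via $\eth\eta$, $\eth'\eta$ or $\nu\eta$ acting on a $b=-3$ or $b=-2$ component, each of which is easily seen to preserve the strict negativity of the spin weight after accounting for the extra derivative in reserve.
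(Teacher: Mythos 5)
Your proposal follows essentially the same route as the paper's proof: decompose $\nabla\bS$ into the components $\thorn\eta,\thorn'\eta,\eth\eta,\eth'\eta,\nu\eta,\bar\nu\eta$ with their weight shifts, and use the degenerate commutators ($[\thorn,\eth]=0$, with $[\thorn,\thorn']$ and $[\eth,\eth']$ reducing to multiplication by constants since $\tau=0=\mu$) together with $\thorn\nu=0=\eth\nu$ to propagate conditions \eqref{1balanced}--\eqref{s_balanced}; the paper merely outsources the 1-balanced half to the cited references rather than rederiving it. Two bookkeeping slips to correct: it is $\eth'$ (not $\eth$) that lowers $s$ by one, and the b.w.\ $-2$ components of $\nabla\bS$ arise from $\thorn\eta$ acting on $b=-3$ components and from $\eth\eta,\eth'\eta$ acting on $b=-2$ components (never from $\nu\eta$, which would require $b=0$), although each of these cases is disposed of by exactly the machinery you describe.
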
 
\begin{proof}
 By assumption, the components of $\bS$ satisfy \eqref{1balanced} and \eqref{s_balanced}. Since the spacetime is degenerate Kundt, it follows from \cite{HerPraPra14,HerOrtPra18} that $\nabla\bS$ is automatically 1-balanced, i.e., eq.~\eqref{1balanced} is satisfied also by the components of $\nabla\bS$. It remains to be shown that the same is true for eq.~\eqref{s_balanced}. To this end, we note that the possible non-zero components of $\nabla\bS$ and their respective boost and spin weights are given in terms of those of $\bS$ (the latter being indicated generically as $\eta$) by 
\beqn
 & & \thorn\eta: (b+1,s) , \qquad \thorn'\eta: (b-1,s) , \qquad \eth\eta: (b,s+1) , \qquad \eth'\eta: (b,s-1) , \label{covder_compts} \\
 & & \nu\eta: (b-2,s-1) , \qquad \bar\nu\eta: (b-2,s+1) , \label{covder_compts2} 
\eeqn
where we have used \eqref{GHP_weights} , \eqref{nonnull_spin_der} and \eqref{bs}, and the components of $\nabla\bS$ are meant up to numerical factors.
Recalling \eqref{1balanced}, \eqref{s_balanced} and the fact that $b\le-2$ for components of $\bS$ (being 1-balanced), it is easy to see that among the components~\eqref{covder_compts}, \eqref{covder_compts2}, those of b.w. $-2$ (if any) can only have s.w.$<0$. Using \eqref{comm_thth'}--\eqref{comm_th'eth}, \eqref{nonnull_cond1}--\eqref{nonnull_cond3} and \eqref{1balanced}, \eqref{s_balanced}, one can also see that the property \eqref{s_balanced} holds also when $\eta$ is replaced by any of the components \eqref{covder_compts}, \eqref{covder_compts2} (and their respective weights), as we wanted to prove.

\end{proof}

Iteratively, it is obvious that the same property holds for any covariant derivative of $\bS$ of arbitrary order, i.e., any components of such covariant derivatives necessarily have b.w. not greater than $-2$, and those of b.w. $-2$ can only have negative s.w..

\subsection{Application to section~\ref{subsec_nonnull_suff}}

\label{app_appl_nonnull}

Let us now employ the above general results for the purposes of the present paper. As discussed in section~\ref{sec_4D_degK_nonnull} (and references therein), the covariant derivative of the self-dual Maxwell tensor $\cF$, of the energy-momentum tensor and of the Weyl tensor are all 1-balanced tensors. It is also easy to see that the covariant derivatives of both the self-dual Maxwell tensor and the self-dual part of the Weyl tensor (or spinor) additionally satisfy condition~\eqref{s_balanced} and are therefore 1-balanced $s$-balanced tensors (spinors) -- by theorem~\ref{th_der_1bal_sbal} this will be true also for their covariant derivatives of any order. The covariant derivative of the energy-momentum tensor~\eqref{der_F_T} is a sum of two terms, the first of which is again a 1-balanced $s$-balanced tensor. The second term in $\nabla_c T_{ab}$ is simply the complex conjugate of the first one, and instead of \eqref{s_balanced} it satisfies its complex conjugate version, namely $\thorn^{-b-2}\eth'^{s}\eta=0$ for $s>0$, and $\thorn^{-b-2}\eta=0$ for $s\le0$ (with $b=-2$, $s=1$). Theorem~\ref{th_der_1bal_sbal} ensures that the first covariant derivative of the first term of $\nabla_c T_{ab}$ is also a 1-balanced $s$-balanced tensor, while an equivalent (up to complex conjugation) result for the second term is obvious. Iteratively, one can argue similarly for covariant derivatives of $T_{ab}$ of any order.

As a conclusion, terms of b.w.~$-2$ (i.e., those of highest b.w.) in the covariant derivatives of arbitrary order of the (anti-)self-dual Maxwell tensor, the energy-momentum tensor or the (anti-)self-dual Weyl tensor can only have either $s<0$ or $s>0$, but not $s=0$ -- which is indeed the result used in section~\ref{subsec_nonnull_suff}.

\section{Proof for null fields with $D\Psi_4=0$ (section~\ref{subsubsec_rec_spec})}
\setcounter{equation}{0}

\label{app_null}

In addition to \eqref{degK_app} and \eqref{degK_app_aligned}, for the solutions of section~\ref{subsubsec_rec_spec} we have
\be
 \tau=\mu=\lambda=0 , \qquad \Psi_2=-\frac{\Lambda_0}{3} , \qquad \Psi_3=0 , \qquad \Phi_1=0 , \label{null_cond1} 
\ee 
so that also $\Phi_{11}=\Phi_{12}=0$, and the spin-coefficient equations, the Bianchi identities and the Maxwell equations needed in the following read
\beqn
 & & \thorn\nu=0 , \qquad \eth\nu=\Phi_{22} , \qquad \eth'\nu=\Psi_4 , \label{null_cond2} \\
 & & \thorn\Psi_4 =0 , \qquad \eth\Psi_4-\eth'\Phi_{22}=-3\nu\Psi_2 , \qquad \thorn\Phi_{22}=0 , \label{null_cond3} \\
 & & \thorn\Phi_2 =0 , \qquad \eth\Phi_2 =0 . \label{null_cond4} 
\eeqn
The only non-zero derivatives of basis spinors \eqref{der_thorn}--\eqref{der_eth'} are as in \eqref{nonnull_spin_der}.

Before proceeding, it is useful to recall that the spinor equivalent of the null field~\eqref{F_null} reads \cite{GHP,penrosebook1,Stephanibook}
\be
 \phi_{AB}=\Phi_2o_Ao_B .
 \label{F_null_spinor}
\ee

As discussed in sections~\ref{subsec_backgr} and \ref{sec_4D_null}, we do {\em not} impose the last component of the Einstein equation in the Einstein-Maxwell theory $\Phi_{22}=\kappa_0\Phi_2\bar\Phi_2$.

\subsection{Balanced $s$-balanced spinors}

\label{subsubsec_bal_s_bal}

We can now conveniently adapt the approach of section~\ref{app_nonnull} and refine the notion of balanced spinors given  in \cite{Pravdaetal02} (cf. also \cite{Coleyetal04vsi,OrtPra16,HerPraPra17,HerOrtPra18}) by defining 
\begin{definition}[Balanced $s$-balanced spinors]
\label{def_bal_sbal}
A scalar $\eta$ of boost and spin weights ($b,s$) is a ``balanced $s$-balanced scalar'' if it satisfies the following two conditions
\beqn
 & & \eta=0 \quad \mbox{for } b\ge0 , \qquad \thorn^{-b}\eta=0 \quad \mbox{for } b<0 , \label{balanced} \\ 
 & & \thorn^{-b-1}\eth^{-s}\eta=0 \quad \mbox{for } s<0 , \qquad \thorn^{-b-1}\eta=0 \quad \mbox{for } s\ge0 . 
	\label{b_s_balanced}
\eeqn
A spinor $\bS$ whose components are all balanced $s$-balanced scalars is a ``balanced $s$-balanced spinor''.
\end{definition} 

Condition~\eqref{balanced} defines balanced scalars \cite{Pravdaetal02,Coleyetal04vsi}, therefore a balanced $s$-balanced spinor is, in particular, a balanced spinor.
We will consider only spinors with an integer $b$, which means that $b\le-1$ for all non-zero components of $\bS$. The above definition also means, in particular, that components of $\bS$ with $b=-1$ can only have $s<0$ (with $\eth^{-s}\eta=0$). Recalling also definition~\ref{def_1bal_sbal}, we further note that one has the following series of implications for a spinor: $\bS$ is 1-balanced $s$-balanced $\Rightarrow$ $\bS$ is balanced $s$-balanced $\Rightarrow$ $\bS$ is balanced.

One has the following useful
\begin{theorem}[Derivatives of balanced $s$-balanced spinors]
 \label{th_der_bal_sbal}
	In the spacetime~\eqref{rec_metric_spec} the covariant derivative of a balanced $s$-balanced spinor $\bS$ is again a balanced $s$-balanced spinor.
 \end{theorem}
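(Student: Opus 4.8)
The plan is to mirror the proof of Theorem~\ref{th_der_1bal_sbal} almost verbatim, only adjusting the bookkeeping to the shifted weight thresholds in Definition~\ref{def_bal_sbal}. First I would recall that, since the spacetime~\eqref{rec_metric_spec} is degenerate Kundt, the fact that $\nabla\bS$ is again \emph{balanced} whenever $\bS$ is balanced is already known (from \cite{Pravdaetal02,Coleyetal04vsi,OrtPra16}); so condition~\eqref{balanced} for the components of $\nabla\bS$ comes for free and only condition~\eqref{b_s_balanced} has to be checked. Then I would list the possible non-zero components of $\nabla\bS$ in terms of a generic component $\eta$ of $\bS$ of weights $(b,s)$: the four GHP-derivative terms $\thorn\eta,\thorn'\eta,\eth\eta,\eth'\eta$ of weights $(b+1,s),(b-1,s),(b,s+1),(b,s-1)$ respectively, together with the two terms $\nu\eta,\bar\nu\eta$ of weights $(b-2,s-1),(b-2,s+1)$ coming from the action of $\thorn'$ on the basis spinors (recall that the only non-zero spinor derivatives are $\thorn'\iota^A=\nu o^A$ and its conjugate, as in~\eqref{nonnull_spin_der}). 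This is the exact same list~\eqref{covder_compts}, \eqref{covder_compts2} as in the non-null case.

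Next I would verify that each of these components again satisfies~\eqref{b_s_balanced}. Since $\bS$ is balanced, all its components have $b\le-1$, and those with $b=-1$ have $s<0$ with $\eth^{-s}\eta=0$. Running through the six cases: the terms $\thorn\eta$, $\eth\eta$, $\eth'\eta$ keep $b\le-1$ or shift $b$ down by zero while shifting $s$; the terms $\thorn'\eta$, $\nu\eta$, $\bar\nu\eta$ strictly lower $b$, so they trivially satisfy~\eqref{b_s_balanced} (one extra $\thorn$ is available). The only slightly delicate case is when a resulting component has $b=-1$: this can happen for $\thorn\eta$ starting from $b=-2$, and for $\eth\eta$/$\eth'\eta$ starting from $b=-1$; in each such case one must check the $s<0$ requirement and that $\thorn^{-b-1}\eth^{-s}=\eth^{-s}$ annihilates it. To handle this I would use the simplified commutators~\eqref{comm_thth'}--\eqref{comm_th'eth} specialized by~\eqref{null_cond1} to push all $\thorn$'s through $\eth$, together with the differential identities~\eqref{null_cond2}--\eqref{null_cond4} ($\thorn\nu=0$, $\thorn\Phi_{22}=0$, $\thorn\Psi_4=0$, $\thorn\Phi_2=0$, $\thorn\eth\eta$-type relations) so that any $\thorn$ acting on the relevant curvature/Maxwell scalars or their $\eth$-derivatives vanishes. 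This reduces the check to the hypothesis~\eqref{b_s_balanced} already satisfied by $\eta$ itself.

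The main obstacle will be precisely the commutator bookkeeping in the $b=-1$ boundary cases: one needs $[\thorn,\eth]=0$ (which holds here by~\eqref{comm_theth} because $\tau=\rho=\kappa=0$) and the fact that $\thorn$ kills $\nu$, $\Phi_{22}$, $\Psi_4$, $\Phi_2$ and all their $\eth$, $\eth'$ derivatives — this last point needs the first equations of~\eqref{null_cond2}, \eqref{null_cond3}, \eqref{null_cond4} plus repeated use of $[\thorn,\eth]=0=[\thorn,\eth']$. Once that is in place, the argument that $\thorn^{-b-1}\eth^{-s}$ (resp.\ $\thorn^{-b-1}$) annihilates each component of $\nabla\bS$ follows by the same elementary manipulation as in Theorem~\ref{th_der_1bal_sbal}. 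Finally I would remark, exactly as in the non-null case, that by iteration every covariant derivative of $\bS$ of arbitrary order is again a balanced $s$-balanced spinor, so that all components of such derivatives have $b\le-1$, and those with $b=-1$ have $s<0$.

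\begin{proof}[Proof sketch]
By assumption the components of $\bS$ satisfy~\eqref{balanced} and \eqref{b_s_balanced}. Since the spacetime~\eqref{rec_metric_spec} is degenerate Kundt, $\nabla\bS$ is automatically balanced \cite{Pravdaetal02,Coleyetal04vsi,OrtPra16}, i.e.~\eqref{balanced} holds for its components; it remains to establish~\eqref{b_s_balanced}. Using~\eqref{GHP_weights}, \eqref{nonnull_spin_der} and \eqref{bs}, the possible non-zero components of $\nabla\bS$ and their weights are, in terms of a generic component $\eta$ of $\bS$ of weights $(b,s)$,
\beqn
 & & \thorn\eta: (b+1,s) , \qquad \thorn'\eta: (b-1,s) , \qquad \eth\eta: (b,s+1) , \qquad \eth'\eta: (b,s-1) , \\
 & & \nu\eta: (b-2,s-1) , \qquad \bar\nu\eta: (b-2,s+1) ,
\eeqn
up to numerical factors. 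Since $\bS$ is balanced, $b\le-1$ for every component, and those with $b=-1$ have $s<0$ with $\eth^{-s}\eta=0$. Inspecting the list, any component of $\nabla\bS$ with $b=-1$ necessarily has $s<0$. The terms $\thorn'\eta$, $\nu\eta$, $\bar\nu\eta$ lower $b$, so~\eqref{b_s_balanced} for them follows from that for $\eta$ with an extra $\thorn$ to spare. For $\thorn\eta$, $\eth\eta$, $\eth'\eta$ one uses the commutators~\eqref{comm_theth}, \eqref{comm_thth'}--\eqref{comm_th'eth} specialized via~\eqref{null_cond1} to move every $\thorn$ past $\eth$ and $\eth'$, together with $\thorn\nu=0$, $\thorn\Psi_4=0$, $\thorn\Phi_{22}=0$, $\thorn\Phi_2=0$ from~\eqref{null_cond2}--\eqref{null_cond4}; this shows $\thorn^{-b-1}\eth^{-s}$ (or $\thorn^{-b-1}$ when $s\ge0$) annihilates each such component, reducing the claim to~\eqref{b_s_balanced} for $\eta$. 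Hence $\nabla\bS$ is again balanced $s$-balanced, and by iteration so is any higher covariant derivative.
\end{proof}
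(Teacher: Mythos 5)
Your proposal follows the paper's own proof essentially verbatim: the balancedness of $\nabla\bS$ is quoted from the degenerate Kundt property, the components of $\nabla\bS$ are listed as in~\eqref{covder_compts}, \eqref{covder_compts2}, and condition~\eqref{b_s_balanced} is checked case by case using the commutators~\eqref{comm_thth'}--\eqref{comm_th'eth} together with~\eqref{null_cond1}--\eqref{null_cond4}. The argument is correct, and your explicit identification of the delicate $b=-1$ boundary cases is if anything slightly more detailed than the paper's version.
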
 
\begin{proof}
Since the spacetime is degenerate Kundt, $\nabla\bS$ is a balance spinor as well \cite{Pravdaetal02,Coleyetal04vsi,HerPraPra14,OrtPra16}, i.e., \eqref{balanced} holds true for any of its components, so that it remains to prove that \eqref{b_s_balanced} is also obeyed by all components of $\nabla\bS$. 

The possible non-zero components of the covariant derivative of $\bS$ and their respective boost and spin weights are again given (up to numerical factors) by \eqref{covder_compts}, \eqref{covder_compts2}. Those of b.w. $-1$ (if any) can only have a negative s.w., as can be easily seen using \eqref{balanced}, \eqref{b_s_balanced} and the fact that $b\le-1$ for components of $\bS$ (being balanced). Furthermore, 
using \eqref{comm_thth'}--\eqref{comm_th'eth}, \eqref{null_cond1}--\eqref{null_cond4}  and \eqref{balanced}, \eqref{b_s_balanced}, it follows that the property \eqref{b_s_balanced} holds also when $\eta$ is replaced by any of the components \eqref{covder_compts}, \eqref{covder_compts2} (and their respective weights). This means that the covariant derivative of $\bS$ is also a balanced $s$-balanced spinor, as we wanted to prove. 

\end{proof}

Clearly the above theorem can be applied iteratively, such that covariant derivative of arbitrary order of a balanced $s$-balanced spinor $\bS$ are also balanced $s$-balanced spinors. It follows, in particular, that any components of such covariant derivatives with b.w. $-1$ can only have a negative s.w..

Let us now consider a spinor $\bS$ possessing components that, in addition to \eqref{balanced} and \eqref{b_s_balanced}, further obey
\be
 \thorn^{-b-2}\eth^{-s}\eta=0 \quad \mbox{for } b\le-2, s<0 , \qquad \thorn^{-b-2}\eta=0 \quad \mbox{for } b\le-2, s\ge0 . 
	\label{b_s_extra}
\ee
(Note that \eqref{b_s_extra} is stronger than \eqref{b_s_balanced} but only applies to components with $b\le-2$.) This implies, in particular, that components of $\bS$ with $b=-2$ can only have $s<0$, with $\eth^{-s}\eta=0$.  Similarly as above, it is easy to see iteratively that this property is inherited by covariant derivatives of $\bS$ of any order. It follows that, for all of those, terms with $(b,s)=(-2,0)$ vanish identically.

\subsection{Application to section~\ref{subsubsec_null_suff}}

\label{app_appl_null}

\subsubsection{Components of $\nabla^k\bF$}

The null electromagnetic field \eqref{F_null} (or its spinor equivalent~\eqref{F_null_spinor}) has $(b,s)=(-1,-1)$. Its only component $\Phi_2$ satifies \eqref{null_cond4}, which means that $\cF$ is balanced $s$-balanced (as defined by \eqref{balanced}, \eqref{b_s_balanced}). By theorem~\ref{th_der_bal_sbal}, covariant derivatives of $\cF$ of arbitrary order are also balanced $s$-balanced, which implies that their components with $b=-1$ necessarily have $s<0$. For the complex conjugate $\bar\cF$ and its covariant derivatives, a similar argument shows that components with $b=-1$ necessarily have $s>0$. Therefore, in neither case are components with $b=-1$ and $s=0$ possible. Notice that this conclusion also applies to arbitrary contractions of covariant derivatives $\cF$ or $\bar\cF$ with the metric or with the Weyl tensor. This result is used in section~\ref{subsubsec_null_suff} to prove that $\nabla_b{H}^{ab}=0$ identically, in the context of the Einstein-Maxwell solutions considered there.

Let us now discuss the form of components with $(b,s)=(-1,-1)$ contained in the covariant derivatives (of arbitrary order) of $\cF$. First, let us notice that after each differentiation, the produced components which are proportional to $\nu$ or $\bar\nu$ (cf.~\eqref{covder_compts2}) will always have $b\le-3$, which follows from $\cF$ being balanced and from the first of \eqref{null_cond2}.\footnote{More precisely, such components are {\em 2-balanced}, as defined in \cite{Kuchynkaetal19} (see also Lemma~1 therein).} For the purposes of our discussion these components can thus be neglected, i.e., when considering derivatives $\nabla^k\phi_{AB}$ it suffices to focus on components of the form $\nabla^k\Phi_2$. It follows that components of $\nabla^k\phi_{AB}$ with $(b,s)=(-1,-1)$ can only be produced by applying on $\Phi_2$ an equal number of $\thorn$ and $\thorn'$ operators, and/or an equal number of $\eth$ and $\eth'$ (in any order) -- in particular this implies that they can appear only in derivatives of even order. Now, because of \eqref{null_cond4}, it suffices to consider components of the form $(\ldots)\thorn'^j\Phi_2$ and $(\ldots)\eth'^k\Phi_2$, where the dots represent an arbitrary sequence of derivative operators such that (because of what we have just remarked) the total number of operators $\thorn$ acting on $\Phi_2$ equals that of $\thorn'$, and similarly for $\eth$ and $\eth'$. Thanks to~\eqref{comm_thth'} and \eqref{comm_theth}, all operators $\thorn$ can be shifted to the right until they act on $\Phi_2$ -- up to bearing in mind that every time that $\thorn$ is swapped with $\thorn'$, also an additional term of lower order of differentiation will be produced. Therefore, terms containing $\thorn$ can be iteratively reduced to either zero or to terms of the form $(\ldots)\thorn\thorn'\Phi_2$, which can be further simplified by noticing that
\be
 \thorn\thorn'\Phi_2=-\Lambda_0\Phi_2  ,
 \label{thotho'Phi_2}
\ee  
which follows from~\eqref{comm_thth'}. This can be repeated until all operators $\thorn$ and $\thorn'$ have disappeared. 
There thus remain to consider terms which contain only an equal number of operators $\eth$ and $\eth'$, but no operators $\thorn$ and $\thorn'$. Here one can use~\eqref{comm_thth'} to shift $\eth$ to the right, up to producing a term of lower order of differentiation every time that $\eth$ is swapped with $\eth'$, and thus iteratively arrive at either zero or at a term proportional to (using~\eqref{comm_etheth'})
\be
 \eth\eth'\Phi_2=\Lambda_0\Phi_2  .
 \label{etheth'Phi_2}
\ee  
To summarize, the components of the covariant derivatives of $\cF$ possessing $(b,s)=(-1,-1)$ are either zero or proportional (via a numerical factor) to 
\be
	\Lambda_0^{k/2}\Phi_2 , 
	\label{DF-1-1}
\ee	
in the latter case $k$ being an even positive integer corresponding to the order of differentiation of the particular covariant derivative being considered.\footnote{This particular power of $\Lambda_0$ also follows from a dimensional argument. Note also that, using \eqref{comm_thth'} and \eqref{comm_etheth'} iteratively (with \eqref{null_cond1}), one can prove  
\be
 \thorn^j\thorn'^j\Phi_2=\left(-\frac{\Lambda_0}{2}\right)^j(j+1)!j!\Phi_2 , \qquad \eth^j\eth'^j\Phi_2=\left(\frac{\Lambda_0}{2}\right)^j(j+1)!j!\Phi_2 ,
\ee    
which generalize \eqref{thotho'Phi_2} and \eqref{etheth'Phi_2} to any integer $j>0$.}  From the previous comments it also follows that $k$-th covariant derivative of~\eqref{F_null_spinor} takes the form
\be
 \nabla_{C_kC'_k}\ldots\nabla_{C_1C'_1}\phi_{AB}=\Lambda_0^{k/2}\phi_{AB}\Upsilon_{C_1C'_1\ldots C_kC'_k} +\ldots \qquad (k \mbox{ even}),
 \label{k_der_null}
\ee
where $\Upsilon_{C_1C'_1\ldots C_kC'_k}$ is a spinor with weights $(0,0)$ and the dots denote terms with weights different from $(-1,-1)$ (see~\eqref{2nd_der_null} below for an explicit example). For certain applications it is also useful to observe that, irrespective of its precise form, the spinor $\Upsilon_{C_1C'_1\ldots C_kC'_k}$ (and thus the full first term on the RHS of~\eqref{k_der_null}) is invariant under the ``Sachs symmetry''\footnote{Mentioned in \cite{GHP,penrosebook1} with reference to unpublished work by Sachs (1961, 1962).} $o_A\mapsto o_A$, $\iota_A\mapsto\iota_A$, $\bar o_{A'}\mapsto \bar\iota_{A'}$, $\bar\iota_{A'}\mapsto-\bar o_{A'}$. By complex conjugation, similar conclusions hold for the components of the covariant derivatives of $\bar\cF$ possessing $(b,s)=(-1,+1)$.\footnote{Although $\bar\phi_{AB}$ (and thus $\bar\cF$) is not invariant under the Sachs symmetry (as opposed to $\phi_{AB}$, cf.~\eqref{F_null_spinor}), it is invariant under a ``dual'' Sachs symmetry $o_A\mapsto -\iota_A$, $\iota_A\mapsto o_A$, $\bar o_{A'}\mapsto\bar o_{A'}$, $\bar\iota_{A'}\mapsto\bar\iota_{A'}$.\label{footn_Sachs}}

More generally, a similar reasoning reveals that, for $s\le-1$, components with $(b,s)=(-1,s)$ contained in the covariant derivatives of $\cF$ will be linear combinations of terms of the form
\be
 \Lambda_0^{(k+s+1)/2}\eth'^{(-s-1)}\Phi_2 ,
 \label{DF-1s}
\ee 
where $k$ is the order of the covariant derivative in question, $k$ and $s$ have opposite parity and $-k-1\le s\le-1$.

Finally, $\phi_{AB}$ does not possess components of b.w.~$-2$, and the components of $\nabla_{CC'}\phi_{AB}$ satisfy~\eqref{b_s_extra} (in addition to \eqref{balanced} and \eqref{b_s_balanced}), therefore (as follows from the comments following~\eqref{b_s_extra}) no non-zero components with $(b,s)=(-2,0)$ are possible in covariant derivatives of $\cF$ or $\bar\cF$ of any order.

For the sake of definiteness, let us demonstrate explicitly the above results for the first two covariant derivatives of \eqref{F_null_spinor}. 
The first derivative reads, thanks to~\eqref{null_cond4},
\be
	\nabla_{CC'}\phi_{AB}=o_Ao_Bo_C\big(\bar o_{C'}\!\!\!\!\underbrace{\thorn'}_{(-2,-1)}\!\!\!\!-\bar\iota_{C'}\!\!\!\underbrace{\eth'}_{(-1,-2)}\!\!\!\big)\Phi_2 ,
	\label{1st_der_null}
\ee	
while the second derivative is
\beqn
	\nabla_{DD'}\nabla_{CC'}\phi_{AB}=o_Ao_Bo_C\Big\{o_D\big[\bar o_{D'}\bar o_{C'}\underbrace{\left(\thorn'^2-\bar\nu\eth'\right)}_{(-3,-1)}-\left(\bar o_{D'}\bar\iota_{C'}+\bar\iota_{D'}\bar o_{C'}\right)\!\!\underbrace{\thorn'\eth'}_{(-2,-2)}\!\!+\bar\iota_{D'}\bar\iota_{C'}\!\!\underbrace{\eth'^2}_{(-1,-3)}\!\!\big] \nonumber \\
 -{}\iota_D\varepsilon_{C'D'}\!\!\!\underbrace{\Lambda_0}_{(-1,-1)}\!\!\!\Big\}\Phi_2 , 
\label{2nd_der_null}
\eeqn
where we have used again \eqref{null_cond4} along with the commutators \eqref{comm_thth'}--\eqref{comm_th'eth}. Below each term we have indicated the weights $(b,s)$ of the component it gives rise to (after acting on $\Phi_2$). Eqs.~\eqref{1st_der_null} and \eqref{2nd_der_null} show that no components with $(b,s)=(-2,0)$ are present, as proven more in general above, and that components with $(b,s)=(-1,s)$ are in agreement with the general results \eqref{DF-1-1} and \eqref{DF-1s}.

\subsubsection{Components of $\nabla^k\bC$ and {$\nabla^k\bSS$}}

Let us now discuss the form of components of covariant derivatives of the Weyl tensor possessing $(b,s)=(-2,0)$ (as noticed in section~\ref{sec_4D_null}, here 
$\nabla^k\BC$ is 1-balanced for $k\ge1$, therefore it cannot contain terms with $b>-2$). In spinorial terms, the first covariant derivative reads (cf. also \cite{HerPraPra17})
\beqn
	\nabla_{EE'}\Psi_{ABCD}=\bar o_{E'}\big(12o_{(A}o_Bo_C\iota_{D)}o_E\!\!\underbrace{\nu\Psi_2}_{(-2,-1)}\!\!-o_{A}o_Bo_Co_D\iota_E\!\!\underbrace{\eth\Psi_4}_{(-2,-1)}\!\!\big) \nonumber \\
	{}+o_{A}o_Bo_Co_Do_E\big(\bar o_{E'}\!\!\!\!\underbrace{\thorn'}_{(-3,-2)}\!\!\!\!-\bar\iota_{E'}\!\!\!\!\underbrace{\eth'}_{(-2,-3)}\!\!\big)\Psi_4 ,
	\label{1st_der_Psi}
\eeqn	
No components with weights $(-2,0)$ are present here, but such terms can be produced by taking further derivatives, as we now explain. Recalling \eqref{null_cond1} and \eqref{null_cond2}, one can observe that all components of \eqref{1st_der_Psi} are expressed only (up to multiplicative constants) in terms of $\nu$ and some of its (second) GHP derivatives. Since $\nu$ has weights $(-2,-1)$, in order to produce (by differentiation) a term with $(-2,0)$ one needs to act on $\nu$ with an equal number of $\thorn$ and $\thorn'$ operators, and with a number of $\eth$ which exceeds precisely by 1 the number of $\eth'$, in any order -- for example one such term is given by (up to reordering of the derivatives) $\thorn'^j\thorn^j\eth'^k\eth^{k+1}\nu$.\footnote{For completeness, it should be observed that if $\eta$ is a component of a certain covariant derivative of $\Psi_{ABCD}$, one additional differentiation also produces terms of the form $\nu\eta$ and $\bar\nu\eta$ (via~\eqref{nonnull_spin_der}, cf. \eqref{covder_compts2}). However, since $\eta$ is 1-balanced and $\thorn\nu=0$ (recall~\eqref{nonnull_cond2}), all such terms are automatically 3-balanced (as defined in \cite{Kuchynkaetal19}, see also Lemma~1 therein), which implies they possess b.w. not greater than $-4$ and can thus be neglected for the purposes of the present discussion.} First, this means that non-zero components with $(-2,0)$ can arise only in covariant derivatives of $\Psi_{ABCD}$ of {\em even} order. Second, using \eqref{comm_thth'}--\eqref{comm_th'eth} one can argue that any such component can be rewritten as a linear combination of terms of the form (for various values of $k$)
\be
 \eth'^k\eth^{k+1}\nu=\eth'^k\eth^{k}\Phi_{22} ,
 \label{terms-2weyl}
\ee
where we have used~\eqref{null_cond2}. Noticing that $(\eth'\eth+\eth\eth')\Phi_{22}=\T \Phi_{22}$, where $\T=2P^2\pa_{\bar\zeta}\pa_\zeta$ is the Laplace operator in the transverse 2-space with metric $2P^{-2}\d\zeta\d\bar\zeta$ (recall~\eqref{rec_metric}, \eqref{W_rec}), and using~\eqref{comm_etheth'}, a linear combination of terms~\eqref{terms-2weyl} can also be expressed equivalently as a linear combination of terms 
\be
 \T^j\Phi_{22} .
 \label{terms-2weyl_Lapl}
\ee
For example, one has $\T^2\Phi_{22}=4(\eth'^2\eth^2-\Lambda_0 \eth'\eth)\Phi_{22}$, $\T^3\Phi_{22}=8(\eth'^3\eth^3-4\Lambda_0 \eth'^2\eth^2+\Lambda_0^2\eth'\eth)\Phi_{22}$, etc..

Since here the Ricci tensor has the form $R_{ab}=\Lambda_0 g_{ab}+2\Phi_{22}\ell_a\ell_b$, the argument given above for the covariant derivatives of the Weyl tensor can easily be adapted to the covariant derivatives of the Ricci tensor, so their components $(-2,0)$ can also only be a linear combination of terms~\eqref{terms-2weyl_Lapl} (and no components with b.w. greater than $-2$ are possible). 


From the fact that the Weyl and Ricci tensor only possess components of b.w. 0 (cf. the comments in the last paragraph of section~\ref{subsec_4D_null_gen}) and $-2$ and their covariant derivatives are 1-balanced, it follows that
any rank-2 tensor constructed from the Riemann tensor and its covariant derivatives of arbitrary order is necessarily symmetric and its component are either proportional to the metric tensor (b.w. 0) or a linear combination of terms~\eqref{terms-2weyl_Lapl} (b.w. $-2$). Thus we have the the following\footnote{A similar result for a different class of spacetimes was obtained in \cite{Gursesetal13,Kuchynkaetal19} (cf. also \cite{KucOrt19}).}
\begin{theorem}[Rank-2 tensors]
 \label{th_T_null}
 In the family of spacetimes~\eqref{rec_metric_spec}, any rank-2 tensor constructed from the Riemann tensor and its covariant derivatives of arbitrary order takes one of the two following equivalent forms
 \be
  \tilde E_{ab}=\lambda_0 g_{ab}+2\ell_a\ell_b\sum_{k=0}^Na_k\T^k\Phi_{22}=\lambda_0 g_{ab}+2\ell_a\ell_b\sum_{k=0}^N\hat a_k\eth'^k\eth^{k}\Phi_{22} , 
	\label{T_null}
 \ee
where $N\in \mathbb{N}$, $\lambda_0$, $a_k$ and $\hat a_k$ are constants, and $\Phi_{22}=\frac{1}{2}\T H$. 
\end{theorem}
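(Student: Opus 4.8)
The plan is to prove Theorem~\ref{th_T_null} by a boost-weight (b.w.) decomposition of $\tilde E_{ab}$, handling the b.w.~$0$ and b.w.~$-2$ parts separately and ruling out the rest. First I would read ``constructed from the Riemann tensor and its covariant derivatives'' as meaning that $\tilde E_{ab}$ is a finite sum of terms, each a full contraction of a tensor product of copies of the metric, its inverse, the Riemann tensor $\BR$ and its covariant derivatives $\nabla^k\BR$, leaving two free indices. In the spacetime~\eqref{rec_metric_spec} the Riemann tensor has components only of b.w.~$0$ (the constants $\Psi_2=-\Lambda_0/3$ and $R=4\Lambda_0$) and b.w.~$-2$ ($\Psi_4$ and $\Phi_{22}$), the metric and its inverse have b.w.~$0$, and each $\nabla^k\BR$ with $k\ge1$ is 1-balanced and hence has only components of b.w.~$\le-2$ (recall $\BS$ and, in the present subcase $D\Psi_4=0$, also $\nabla^k\BC$ are 1-balanced). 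Since b.w. is additive under tensor products and preserved by contractions, the b.w. of any term equals the sum of the (non-positive) b.w. of the factor-components it selects; hence $\tilde E_{ab}$ carries b.w.~$0$, $-1$ or $-2$ only (a rank-$2$ tensor cannot exceed $-2$). A b.w.~$-1$ part would require a factor-component of b.w. exactly $-1$, which is unavailable (every factor-component has b.w. $0$ or $\le-2$, and no such sum equals $-1$); therefore $\tilde E_{ab}=(\tilde E_{ab})_0+(\tilde E_{ab})_{-2}$. Since $(\tilde E_{ab})_0\propto g_{ab}$ and $(\tilde E_{ab})_{-2}\propto\ell_a\ell_b$ (established below), $\tilde E_{ab}$ is automatically symmetric.

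For the b.w.~$0$ part: as no factor-component carries positive b.w., $(\tilde E_{ab})_0$ is obtained by replacing every factor by its b.w.~$0$ part, and since $\nabla^k\BR$ ($k\ge1$) has no such part, $(\tilde E_{ab})_0$ is built solely out of the b.w.~$0$ part of $\BR$. The latter coincides with the Riemann tensor of the direct-product ``background'' obtained by setting $H^{(0)}=0$ in~\eqref{rec_metric_spec}, i.e.\ the (anti-)Nariai universe (the function $H^{(0)}$ not affecting the b.w.~$0$ part of the curvature). By the universality of that background~\cite{Herviketal15} one concludes $(\tilde E_{ab})_0=\lambda_0 g_{ab}$, and $\lambda_0$, being a scalar polynomial curvature invariant of the CSI metric~\eqref{rec_metric_spec}, is constant.

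The core of the proof is the b.w.~$-2$ part, which has s.w.~$0$ and is proportional to $\ell_a\ell_b$, its coefficient being a scalar of GHP weights $(b,s)=(-2,0)$. Any contributing term must contain a factor supplying negative b.w., and two or more such factors would give total b.w.~$\le-4$; thus the term is linear in the b.w.~$\le-2$ part of a single factor, the remaining factors contributing their (constant) b.w.~$0$ parts. Terms linear in the b.w.~$-2$ Weyl part cannot appear, because such a term would have $(b,s)=(-2,\pm2)$ while no rank-$2$ tensor has these weights; there remain the b.w.~$-2$ Ricci part (yielding the $k=0$ term $\propto\Phi_{22}\ell_a\ell_b$) and the $(b,s)=(-2,0)$ components of $\nabla^k\BR$ with $k\ge1$. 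To identify the latter, I would split $\BR=\BC+(\text{Ricci part})$ and use $\eth\nu=\Phi_{22}$, $\eth'\nu=\Psi_4$ from~\eqref{null_cond2}: by~\eqref{1st_der_Psi} every component of $\nabla\BC$ is, up to numerical constants, $\nu$ or a GHP derivative of $\nu$, and the same holds inductively for $\nabla^k\BC$ (in terms of higher GHP derivatives of $\nu$), modulo the $\nu$- and $\bar\nu$-proportional terms produced by~\eqref{covder_compts2}, which are 3-balanced (b.w.~$\le-4$) and negligible here. Since $\nu$ has weights $(b,s)=(-2,-1)$, a component with $(b,s)=(-2,0)$ can arise only by acting on $\nu$ with equal numbers of $\thorn$ and $\thorn'$ and with exactly one more $\eth$ than $\eth'$; using the commutators~\eqref{comm_thth'}--\eqref{comm_th'eth} together with~\eqref{null_cond1}--\eqref{null_cond3}, every such component reduces to a linear combination, with constant coefficients (powers of $\Lambda_0$ times combinatorial factors), of the terms $\eth'^j\eth^{j+1}\nu=\eth'^j\eth^{j}\Phi_{22}$ of~\eqref{terms-2weyl}, equivalently of the terms~\eqref{terms-2weyl_Lapl}, by $(\eth'\eth+\eth\eth')\Phi_{22}=\T\Phi_{22}$ and~\eqref{comm_etheth'}. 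Because $R_{ab}=\Lambda_0 g_{ab}+2\Phi_{22}\ell_a\ell_b$ in~\eqref{rec_metric_spec}, the identical reduction applies to $\nabla^k(\text{Ricci part})$. Collecting all contributions gives $\tilde E_{ab}=\lambda_0 g_{ab}+2\ell_a\ell_b\sum_{k=0}^N a_k\T^k\Phi_{22}$ with finite $N$ and constant $a_k$; the second displayed form follows by inverting the triangular relation between $\{\T^i\Phi_{22}\}$ and $\{\eth'^j\eth^j\Phi_{22}\}$, and $\Phi_{22}=\tfrac{1}{2}\T H$ in this gauge.

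I expect the main obstacle to be precisely this last reduction: systematically commuting the GHP operators $\thorn,\thorn',\eth,\eth'$ through an arbitrary-order covariant derivative of $\BC$ (or of the Ricci tensor) so as to bring every $(b,s)=(-2,0)$ component to the canonical form $\T^j\Phi_{22}$, while checking that all coefficients so generated are genuine constants and that the spurious higher-weight ($\nu$- and $\bar\nu$-proportional, 3-balanced) contributions really can be discarded. By contrast, the b.w.~$0$ and b.w.~$-1$ parts are settled by the structural b.w. bookkeeping above together with the cited universality of the (anti-)Nariai background, and are routine.
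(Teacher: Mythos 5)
Your proposal is correct and follows essentially the same route as the paper's own proof (end of section~\ref{subsec_4D_null_gen} together with appendix~\ref{app_appl_null}): the boost-weight decomposition, the universality of the (anti-)Nariai background from \cite{Herviketal15} for the b.w.~$0$ part, the exclusion of the b.w.~$-2$ Weyl contribution by spin weight, the reduction of the $(b,s)=(-2,0)$ components of $\nabla^k\BC$ to $\eth'^j\eth^{j}\Phi_{22}$ via the counting of $\thorn,\thorn',\eth,\eth'$ acting on $\nu$ and the commutators, the discarding of the 3-balanced $\nu$-, $\bar\nu$-proportional terms, and the identical treatment of the Ricci part. No gaps to report.
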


\begin{remark}
\label{rem_T_null}
 From the previous discussion it follows that the term of b.w. $-2$ in \eqref{T_null} (i.e., the one proportional to $\ell_a\ell_b$) can only be constructed linearly in the traceless Ricci tensor and the covariant derivatives of arbitrary order of the Riemann tensor. 
\end{remark}

%
%
%
%
%

\begin{thebibliography}{10}

\bibitem{Mie12}
G.~Mie, {\it Grundlagen einer {T}heorie der {M}aterie},  {\em Ann. Physik} {\bf
  342} (1912) 511--534.

\bibitem{Born33}
M.~Born, {\it Modified field equations with a finite radius of the electron},
  {\em Nature} {\bf 132} (1933) 282.

\bibitem{BorInf34}
M.~Born and L.~Infeld, {\it Foundations of the new field theory},  {\em Proc.
  R. Soc. {\rm A}} {\bf 144} (1934) 425--451.

\bibitem{HeiEul36}
W.~Heisenberg and H.~Euler, {\it Folgerungen aus der {D}iracschen {T}heorie des
  {P}ositrons},  {\em Z. Phys.} {\bf 98} (1936) 714--732.

\bibitem{Weisskopf36}
V.~Weisskopf, {\it {\"U}ber die {E}lektrodynamik des {V}akuums auf {G}rund der
  {Q}uantentheorie des {E}lektrons},  {\em Kong. Dan. Vid. Sel. Mat. Fys. Med.}
  {\bf 14} (1936), no.~6 1--39.

\bibitem{Euler36}
H.~Euler, {\it {\"U}ber die {S}treuung von {L}icht an {L}icht nach der
  {D}iracschen {T}heorie},  {\em Ann. Physik} {\bf 418} (1936) 398--448.

\bibitem{Schwinger51}
J.~S. Schwinger, {\it On gauge invariance and vacuum polarization},  {\em Phys.
  Rev.} {\bf 82} (1951) 664--679.

\bibitem{FraTse85}
E.~S. Fradkin and A.~A. Tseytlin, {\it Non-linear electrodynamics from
  quantized strings},  {\em Phys. Lett. {\rm B}} {\bf 163} (1985) 123--130.

\bibitem{Schroedinger35}
E.~Schr{\"{o}}dinger, {\it Contributions to {B}orn's new theory of the
  electromagnetic field},  {\em Proc. Roy. Soc. London Ser. A} {\bf 150} (1935)
  465--477.

\bibitem{Schroedinger43}
E.~Schr{\"{o}}dinger, {\it A new exact solution in non-linear optics
  (two-wave-system)},  {\em Proc. Roy. Irish Acad.} {\bf A49} (1943) 59--66.

\bibitem{Deser75}
S.~Deser, {\it Plane waves do not polarize the vacuum},  {\em J. Phys.~A} {\bf
  8} (1975) 1972--1974.

\bibitem{OrtPra16}
M.~Ortaggio and V.~Pravda, {\it Electromagnetic fields with vanishing scalar
  invariants},  {\em Class. Quantum Grav.} {\bf 33} (2016) 115010.

\bibitem{OrtPra18}
M.~Ortaggio and V.~Pravda, {\it Electromagnetic fields with vanishing quantum
  corrections},  {\em Phys. Lett. {\rm B}} {\bf 779} (2018) 393--395.

\bibitem{HerOrtPra18}
S.~Hervik, M.~Ortaggio, and V.~Pravda, {\it Universal electromagnetic fields},
  {\em Class. Quantum Grav.} {\bf 35} (2018) 175017.

\bibitem{KucOrt19}
M.~Kuchynka and M.~Ortaggio, {\it {E}instein-{M}axwell fields with vanishing
  higher-order corrections},  {\em Phys. Rev. {\rm D}} {\bf 99} (2019) 044048.

\bibitem{Coleyetal08}
A.~A. Coley, G.~W. Gibbons, S.~Hervik, and C.~N. Pope, {\it Metrics with
  vanishing quantum corrections},  {\em Class. Quantum Grav.} {\bf 25} (2008)
  145017.

\bibitem{Gulluetal11}
{\.{I}}.~G{\"u}ll{\"u}, M.~G{\"u}rses, T.~C. {\c{S}}i{\c{s}}man, and B.~Tekin,
  {\it {AdS} waves as exact solutions to quadratic gravity},  {\em Phys. Rev.
  {\rm D}} {\bf 83} (2011) 084015.

\bibitem{MalPra11prd}
T.~M\'alek and V.~Pravda, {\it Type {III} and {N} solutions to quadratic
  gravity},  {\em Phys. Rev. {\rm D}} {\bf 84} (2011) 024047.

\bibitem{Kuchynkaetal19}
M.~Kuchynka, T.~M\'alek, V.~Pravda, and A.~Pravdov\'a, {\it Almost universal
  spacetimes in higher-order gravity theories},  {\em Phys. Rev. {\rm D}} {\bf
  99} (2019) 044048.

\bibitem{Gursesetal13}
M.~G{\"u}rses, S.~Hervik, T.~C. {\c{S}}i{\c{s}}man, and B.~Tekin, {\it
  {A}nti-de~{S}itter-wave solutions of higher derivative theories},  {\em Phys.
  Rev. Lett.} {\bf 111} (2013) 101101.

\bibitem{GurHeySen21}
M.~G{\"{u}}rses, Y.~Heydarzade, and {\c{C}}.~{\c{S}}ent{\"{u}}rk, {\it
  {K}err-{S}child-{K}undt metrics in generic gravity theories with modified
  {H}orndeski couplings},  {\em Eur. Phys. J. C} {\bf 81} (2021) 1147.

\bibitem{GurHeySen22}
M.~G{\"{u}}rses, Y.~Heydarzade, and {\c{C}}.~{\c{S}}ent{\"{u}}rk, {\it
  {K}err-{S}child-{K}undt metrics in generic {E}instein-{M}axwell theories},
  {\em Phys. Rev. {\rm D}} {\bf 105} (2022) 044004.

\bibitem{HerPraPra14}
S.~Hervik, V.~Pravda, and A.~Pravdov\'a, {\it Type {III} and {N} universal
  spacetimes},  {\em Class. Quantum Grav.} {\bf 31} (2014) 215005.

\bibitem{ColHerPel06}
A.~Coley, S.~Hervik, and N.~Pelavas, {\it On spacetimes with constant scalar
  invariants},  {\em Class. Quantum Grav.} {\bf 23} (2006) 3053--3074.

\bibitem{ColHerPel09b}
A.~Coley, S.~Hervik, and N.~Pelavas, {\it Lorentzian spacetimes with constant
  curvature invariants in four dimensions},  {\em Class. Quantum Grav.} {\bf
  26} (2009) 125011.

\bibitem{ColHerPel09a}
A.~Coley, S.~Hervik, and N.~Pelavas, {\it Spacetimes characterized by their
  scalar curvature invariants},  {\em Class. Quantum Grav.} {\bf 26} (2009)
  025013.

\bibitem{Coleyetal09}
A.~Coley, S.~Hervik, G.~O. Papadopoulos, and N.~Pelavas, {\it Kundt
  spacetimes},  {\em Class. Quantum Grav.} {\bf 26} (2009) 105016.

\bibitem{OrtPraPra13rev}
M.~Ortaggio, V.~Pravda, and A.~Pravdov\'a, {\it Algebraic classification of
  higher dimensional spacetimes based on null alignment},  {\em Class. Quantum
  Grav.} {\bf 30} (2013) 013001.

\bibitem{Stephanibook}
H.~Stephani, D.~Kramer, M.~MacCallum, C.~Hoenselaers, and E.~Herlt, {\em Exact
  Solutions of {E}instein's Field Equations}.
\newblock Cambridge University Press, Cambridge, second~ed., 2003.

\bibitem{GHP}
R.~Geroch, A.~Held, and R.~Penrose, {\it A space-time calculus based on pairs
  of null directions},  {\em J. Math. Phys.} {\bf 14} (1973) 874--881.

\bibitem{NP}
E.~T. Newman and R.~Penrose, {\it An approach to gravitational radiation by a
  method of spin coefficients},  {\em J. Math. Phys.} {\bf 3} (1962) 566--578.
  See also E. Newman and R. Penrose (1963), Errata, {\em J. Math. Phys.} 4:998.

\bibitem{penrosebook1}
R.~Penrose and W.~Rindler, {\em Spinors and Space-Time}, vol.~1.
\newblock Cambridge University Press, Cambridge, 1984.

\bibitem{Bertotti59}
B.~Bertotti, {\it Uniform electromagnetic field in the theory of general
  relativity},  {\em Phys. Rev.} {\bf 116} (1959) 1331--1333.

\bibitem{OrtPod02}
M.~Ortaggio and J.~Podolsk\'y, {\it Impulsive waves in electrovac direct
  product spacetimes with {$\Lambda$}},  {\em Class. Quantum Grav.} {\bf 19}
  (2002) 5221--5227.

\bibitem{HerPraPra17}
S.~Hervik, V.~Pravda, and A.~Pravdov\'a, {\it Universal spacetimes in four
  dimensions},  {\em JHEP} {\bf 10} (2017) 028.

\bibitem{Pravdaetal02}
V.~Pravda, A.~Pravdov\'a, A.~Coley, and R.~Milson, {\it All spacetimes with
  vanishing curvature invariants},  {\em Class. Quantum Grav.} {\bf 19} (2002)
  6213--6236.

\bibitem{Coleyetal04vsi}
A.~Coley, R.~Milson, V.~Pravda, and A.~Pravdov\'a, {\it Vanishing scalar
  invariant spacetimes in higher dimensions},  {\em Class. Quantum Grav.} {\bf
  21} (2004) 5519--5542.

\bibitem{Herviketal15}
S.~Hervik, T.~M\'alek, V.~Pravda, and A.~Pravdov\'a, {\it Type {II} universal
  spacetimes},  {\em Class. Quantum Grav.} {\bf 32} (2015) 245012.

\bibitem{ColHerPel10}
A.~Coley, S.~Hervik, and N.~Pelavas, {\it Lorentzian manifolds and scalar
  curvature invariants},  {\em Class. Quantum Grav.} {\bf 27} (2010) 102001.

\bibitem{LeviCivita17BR}
T.~Levi-Civita, {\it Realt\`a fisica di alcuni spazi normali del {B}ianchi},
  {\em Rend. Acc. Lincei} {\bf 26} (1917) 519--531.

\bibitem{Nariai51}
H.~Nariai, {\it On a new cosmological solution of {E}instein's field equations
  of gravitation},  {\em Sci. Rep. T\^{o}hoku Univ.} {\bf 35} (1951) 62--67.

\bibitem{Robinson59}
I.~Robinson, {\it A solution of the {M}axwell--{E}instein equations},  {\em
  Bull. Acad. Polon.} {\bf 7} (1959) 351--352.

\bibitem{CahDef68}
M.~Cahen and L.~Defrise, {\it Lorentzian 4 dimensional manifolds with ``local
  isotropy''},  {\em Commun. Math. Phys.} {\bf 11} (1968) 56--76.

\bibitem{PlebHac79}
J.~F. Pleba\'nski and S.~Hacyan, {\it Some exceptional electrovac type {$D$}
  metrics with cosmological constant},  {\em J. Math. Phys.} {\bf 20} (1979)
  1004--1010.

\bibitem{GarAlvar84}
A.~Garc\'{\i}a~D. and M.~Alvarez~C., {\it Shear-free special electrovac
  type-{II} solutions with cosmological constant},  {\em Nuovo Cimento {\rm B}}
  {\bf 79} (1984) 266--270.

\bibitem{Khlebnikov86}
V.~I. Khlebnikov, {\it Gravitational radiation in electromagnetic universes},
  {\em Class. Quantum Grav.} {\bf 3} (1986) 169--173.

\bibitem{PodOrt03}
J.~Podolsk\'y and M.~Ortaggio, {\it Explicit {K}undt type {$II$} and {$N$}
  solutions as gravitational waves in various type {$D$} and {$O$} universes},
  {\em Class. Quantum Grav.} {\bf 20} (2003) 1685--1701.

\bibitem{OrtAst18}
M.~Ortaggio and M.~Astorino, {\it Ultrarelativistic boost of a black hole in
  the magnetic universe of {L}evi-{C}ivita--{B}ertotti--{R}obinson},  {\em
  Phys. Rev. {\rm D}} {\bf 97} (2018) 104052.

\bibitem{Lewand92}
J.~Lewandowski, {\it Reduced holonomy group and {E}instein equations with a
  cosmological constant},  {\em Class. Quantum Grav.} {\bf 9} (1992)
  L147--L151.

\bibitem{Ortaggio02}
M.~Ortaggio, {\it Impulsive waves in the {N}ariai universe},  {\em Phys. Rev.
  {\rm D}} {\bf 65} (2002) 084046.

\bibitem{Walker50}
A.~G. Walker, {\it Canonical form for a {R}iemannian space with a parallel
  field of null planes},  {\em Quart. J. Math. Oxford} {\bf 1} (1950) 69--79.

\bibitem{LerMcL73}
J.~Leroy and R.~G. McLenaghan, {\it Sur les espace-temps contenant un champ de
  vecteurs isotropes r\'ecurrents},  {\em Bull. Acad. Roy. Belg. Cl. Sci.} {\bf
  59} (1973) 584--610.

\bibitem{Kadlecovaetal09}
H.~Kadlecov{\'a}, A.~Zelnikov, P.~Krtou\v{s}, and J.~Podolsk\'{y}, {\it
  Gyratons on direct-product spacetimes},  {\em Phys. Rev. {\rm D}} {\bf 80}
  (2009) 024004.

\bibitem{VandenBergh89}
N.~Van~den Bergh, {\it Einstein--{M}axwell null fields of {P}etrov type {D}},
  {\em Class. Quantum Grav.} {\bf 6} (1989) 1871--1878.

\bibitem{petrov}
A.~Z. Petrov, {\em Einstein Spaces}.
\newblock Pergamon Press, Oxford, translation of the 1961 {R}ussian~ed., 1969.

\bibitem{Ozsvath65a}
I.~Ozsv\'ath, {\it New homogeneous solutions of {E}instein's field equations
  with incoherent matter obtained by a spinor technique},  {\em J. Math. Phys.}
  {\bf 6} (1965) 590--610.

\bibitem{Ozsvath65b}
I.~Ozsv\'ath, {\it Homogeneous solutions of the {E}instein-{M}axwell
  equations},  {\em J. Math. Phys.} {\bf 6} (1965) 1255--1265.

\bibitem{RyaShebook}
M.~P. Ryan and L.~C. Shepley, {\em Homogeneous Relativistic Cosmologies}.
\newblock Princeton University Press, Princeton, 1975.

\bibitem{AndTor20}
I.~Anderson and C.~Torre, {\it Spacetime groups},  {\em J. Math. Phys.} {\bf
  61} (2020) 072501.

\bibitem{GolSac62}
J.~N. Goldberg and R.~K. Sachs, {\it A theorem on {P}etrov types},  {\em Acta
  Phys. Polon.} {\bf \textnormal{Suppl.} 22} (1962) 13--23.

\bibitem{Plebanski70}
J.~Pleba\'nski, {\em Lectures on non-linear electrodynamics}.
\newblock Nordita, Copenhagen, 1970.

\bibitem{Peres61}
A.~Peres, {\it Nonlinear electrodynamics in general relativity},  {\em Phys.
  Rev.} {\bf 122} (1961) 273--274.

\bibitem{Smolic18}
I.~Smoli\'c, {\it Spacetimes dressed with stealth electromagnetic fields},
  {\em Phys. Rev. {\rm D}} {\bf 97} (2018) 084041.

\bibitem{Kichenassamy59}
S.~Kichenassamy, {\it Sur le champ {\'e}lectromagn{\'e}tique singulier en
  th{\'e}orie de {B}orn--{I}nfeld},  {\em C. R. Hebd. Seanc. Acad. Sci.} {\bf
  248} (1959) 3690--3692.

\bibitem{KreKic60}
H.~Kremer and S.~Kichenassamy, {\it Sur le champ {\'e}lectromagn{\'e}tique
  singulier dans une th{\'e}orie du type {B}orn--{I}nfeld},  {\em C. R. Hebd.
  Seanc. Acad. Sci.} {\bf 250} (1960) 1192--1194.

\bibitem{Morales82}
L.~E. Morales, {\it The {B}ertotti-{R}obinson solutions as interpreted in terms
  of nonlinear electrodynamics},  {\em Nuovo Cimento {\rm B}} {\bf 68} (1982)
  55--72.

\bibitem{PleMor86}
J.~Pleba\'nski and L.~E. Morales, {\it Exceptional {$D$}-type solutions to
  {E}instein's equations with nonlinear electromagnetic sources and $\lambda$},
   {\em Nuovo Cimento {\rm B}} {\bf 92} (1986) 61--77.

\bibitem{Bandosetal20}
I.~Bandos, K.~Lechner, D.~Sorokin, and P.~K. Townsend, {\it Nonlinear
  duality-invariant conformal extension of {M}axwell's equations},  {\em Phys.
  Rev. {\rm D}} {\bf 102} (2020) 121703(R).

\bibitem{Kosyakov20}
B.~P. Kosyakov, {\it Nonlinear electrodynamics with the maximum allowable
  symmetries},  {\em Phys. Lett. {\rm B}} {\bf 810} (2020) 135840.

\bibitem{Flores-Alfonsoetal21}
D.~Flores-Alfonso, B.~A. Gonz\'alez-Morales, R.~Linares, and M.~Maceda, {\it
  Black holes and gravitational waves sourced by non-linear duality
  rotation-invariant conformal electromagnetic matter},  {\em Phys. Lett. {\rm
  B}} {\bf 812} (2021) 136011.

\bibitem{Bach21}
R.~Bach, {\it Zur {W}eylschen {R}elativit{\"a}tstheorie und der {W}eylschen
  {E}rweiterung des {K}r{\"u}mmungsbegriffs},  {\em Math. Zeitschr.} {\bf 9}
  (1921) 110--135.

\bibitem{Buchdahl53}
H.~A. Buchdahl, {\it On a set of conform-invariant equations of the
  gravitational field},  {\em Proc. Edinburgh Math. Soc.} {\bf 10} (1953)
  16--20.

\bibitem{Horndeski76}
G.~W. Horndeski, {\it Conservation of charge and the {E}instein-{M}axwell field
  equations},  {\em J. Math. Phys.} {\bf 17} (1976) 1980--1987.

\bibitem{Horndeski76_unp}
G.~W. Horndeski, ``A theorem on second-order vector-tensor field theories.''
  Unpublished, 1976.

\bibitem{HorWai77}
G.~W. Horndeski and J.~Wainwright, {\it Energy-momentum tensor of the
  electromagnetic field},  {\em Phys. Rev. {\rm D}} {\bf 16} (1977) 1691--1701.

\bibitem{Horndeski78_Birk}
G.~W. Horndeski, {\it {B}irkhoff's theorem and magnetic monopole solutions for
  a system of generalized {E}instein-{M}axwell field equations},  {\em J. Math.
  Phys.} {\bf 19} (1978) 668--674.

\bibitem{GurHal78}
M.~G{\"{u}}rses and M.~Halil, {\it {PP}-waves in the generalized {E}instein
  theories},  {\em Phys. Lett. {\rm A}} {\bf 68} (1978) 182--184.

\bibitem{Horndeski79}
G.~W. Horndeski, {\it Null electromagnetic fields in the generalized
  {E}instein-{M}axwell field theory},  {\em J. Math. Phys.} {\bf 20} (1979)
  726--732.

\end{thebibliography}

\providecommand{\href}[2]{#2}\begingroup\raggedright\endgroup

\end{document}